\long\def\eat#1{}
\newenvironment{customlegend}[1][]{%
    \begingroup
    \csname pgfplots@init@cleared@structures\endcsname
    \pgfplotsset{#1}
    \def\addlegendimage{\csname pgfplots@addlegendimage\endcsname}
    \def\addlegendentry{\csname pgfplots@addlegendentry\endcsname}
}{%
    \csname pgfplots@createlegend\endcsname
    \endgroup
}%
\setlist[itemize]{noitemsep,partopsep=0pt,topsep=0pt,leftmargin=1.5em}
\apptocmd\normalsize{%
    \setlength{\abovedisplayskip}{2pt}
    \setlength{\belowdisplayskip}{2pt}
    \setlength{\abovedisplayshortskip}{2pt}
    \setlength{\belowdisplayshortskip}{2pt}
}{}{}
\lstdefinelanguage{Solidity}{
    keywords=[1]{anonymous, assembly, assert, balance, break, call, callcode, case, catch, class, constant, continue, contract, debugger, default, delegatecall, delete, do, else, emit, event, export, external, false, finally, for, function, gas, if, implements, import, in, indexed, instanceof, interface, internal, is, length, library, log0, log1, log2, log3, log4, memory, modifier, new, payable, pragma, private, protected, public, pure, push, require, return, returns, revert, selfdestruct, send, storage, struct, suicide, super, switch, then, this, throw, transfer, true, try, typeof, using, value, view, while, with, addmod, ecrecover, keccak256, mulmod, ripemd160, sha256, sha3}, 
    keywordstyle=[1]\color{blue}\bfseries,
    keywords=[2]{address, bool, byte, bytes, bytes1, bytes2, bytes3, bytes4, bytes5, bytes6, bytes7, bytes8, bytes9, bytes10, bytes11, bytes12, bytes13, bytes14, bytes15, bytes16, bytes17, bytes18, bytes19, bytes20, bytes21, bytes22, bytes23, bytes24, bytes25, bytes26, bytes27, bytes28, bytes29, bytes30, bytes31, bytes32, enum, int, int8, int16, int24, int32, int40, int48, int56, int64, int72, int80, int88, int96, int104, int112, int120, int128, int136, int144, int152, int160, int168, int176, int184, int192, int200, int208, int216, int224, int232, int240, int248, int256, mapping, string, uint, uint8, uint16, uint24, uint32, uint40, uint48, uint56, uint64, uint72, uint80, uint88, uint96, uint104, uint112, uint120, uint128, uint136, uint144, uint152, uint160, uint168, uint176, uint184, uint192, uint200, uint208, uint216, uint224, uint232, uint240, uint248, uint256, var, void, ether, finney, szabo, wei, days, hours, minutes, seconds, weeks, years},	
    keywordstyle=[2]\color{teal}\bfseries,
    keywords=[3]{blockhash, coinbase, difficulty, gaslimit, number, timestamp, msg, data, gas, sender, sig, value, now, tx, gasprice, origin},	
    keywordstyle=[3]\color{violet}\bfseries,
    identifierstyle=\color{black},
    sensitive=false,
    comment=[l]{//},
    morecomment=[s]{/*}{*/},
    commentstyle=\color{gray}\ttfamily,
    stringstyle=\color{red}\ttfamily,
    morestring=[b]',
    morestring=[b]"
}
\theoremstyle{acmplain}
\theoremstyle{acmdefinition}
\newtheorem{definition}{Definition}[section]
\newtheorem{example}{Example}[section]
\let\emptyset\varnothing
\title{vChain: Enabling Verifiable Boolean Range Queries over Blockchain Databases}
\author{Cheng Xu}
\affiliation{%
    \institution{Hong Kong Baptist University}
}
\email{chengxu@comp.hkbu.edu.hk}
\author{Ce Zhang}
\affiliation{%
    \institution{Hong Kong Baptist University}
}
\email{cezhang@comp.hkbu.edu.hk}
\author{Jianliang Xu}
\affiliation{%
    \institution{Hong Kong Baptist University}
}
\email{xujl@comp.hkbu.edu.hk}
\begin{document}

\begin{abstract}
Blockchains have recently been under the spotlight due to the boom of cryptocurrencies and decentralized applications. There is an increasing demand for querying the data stored in a blockchain database. To ensure query integrity, the user can maintain the entire blockchain database and query the data locally. However, this approach is not economic, if not infeasible, because of the blockchain's huge data size and considerable maintenance costs. In this paper, we take the first step toward investigating the problem of verifiable query processing over blockchain databases. We propose a novel framework, called vChain, that alleviates the storage and computing costs of the user and employs verifiable queries to guarantee the results' integrity. To support verifiable Boolean range queries, we propose an accumulator-based authenticated data structure that enables dynamic aggregation over arbitrary query attributes. Two new indexes are further developed to aggregate intra-block and inter-block data records for efficient query verification. We also propose an inverted prefix tree structure to accelerate the processing of a large number of subscription queries simultaneously. Security analysis and empirical study validate the robustness and practicality of the proposed techniques.
\end{abstract}

\keywords{Query processing; Data integrity; Blockchain}

\maketitle

\section{Introduction}\label{sec:introduction}

Owing to the success of cryptocurrencies such as Bitcoin~\cite{nakamoto2008bitcoin} and Ethereum~\cite{wood2014ethereum}, blockchain technology has been gaining overwhelming momentum in recent years. A blockchain is an append-only data structure that is distributively stored among peers in the network. Although peers in the network may not trust each other, a blockchain ensures data integrity from two aspects. First, powered by the hash chain technique,
data stored on a blockchain are immutable. Second, thanks to its consensus protocol, a blockchain guarantees that all peers maintain identical replicas of the data.
These cryptographically guaranteed security mechanisms, together with the decentralization and provenance properties of a blockchain, make blockchains a potential technology to revolutionize database systems~\cite{mohan2017tutorial,Dinh2017BLOCKBENCHAF,dinh2018untangling,Vo2018EDBT,wang2018forkbase}.

From the database perspective, a blockchain can be seen as a database storing a large collection of timestamped data records. With widespread adoption of blockchains for data-intensive applications such as finance, supply chains, and IP rights management, there is an increasing demand from users to query the data stored in a blockchain database. For example, in the Bitcoin network, users may want to find the transactions that satisfy a variety of \emph{range} selection predicates, such as \textit{``Transaction Fee $\geq$ \$50''} and \textit{``\$0.99M $\leq$ Total Output $\leq$ \$1.01M''}~\cite{Blockchair}. In a blockchain-based patent management system, users can use \emph{Boolean} operators to search for combinations of keywords, such as \textit{$\text{``Blockchain''} \land (\text{``Query''} \lor \text{``Search''})$}, in the patents' abstracts~\cite{Vaultitude}. Whereas many companies, including database giants IBM, Oracle, and SAP, as well as startups such as FlureeDB~\cite{FlureeDB}, BigchainDB~\cite{BigchainDB}, and SwarmDB~\cite{SwarmDB}, have devoted their efforts to developing blockchain database solutions to support SQL-like queries, all of them assume the existence of a trusted party who can faithfully execute user queries that are based on a materialized view of the blockchain database. However, such a trusted party may not always exist and the integrity of query results cannot be~guaranteed. Query processing with integrity assurance remains an unexplored issue in blockchain~research.

In a typical blockchain network~\cite{nakamoto2008bitcoin, wood2014ethereum},\footnote{For ease of exposition, in this paper we focus our discussion on public blockchains, but the proposed verifiable query techniques can be easily extended to private blockchains.} there are three types of nodes as shown in Fig.~\ref{fig:blockchainnet}: \emph{full node}, \emph{miner}, and \emph{light node}. A full node stores all the data in the blockchain, including block headers and data records. A miner is a full node with great computing power, responsible for constructing \emph{consensus proofs} (e.g., \emph{nonce} in the Bitcoin blockchain). A light node stores only block headers, which include the consensus proof and the cryptographic hashes of a block. Note that the data records are not stored in light nodes.

\begin{figure}[t]
    \centering
    \resizebox{.92\linewidth}{!}{\input{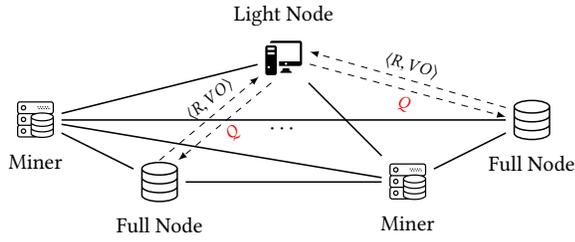}}
    \caption{A Blockchain Network}\label{fig:blockchainnet}
\end{figure}

To ensure the integrity of queries over a blockchain database, the query user could join the blockchain network as a full node. Then, the user can download and validate the entire database and process queries locally without compromising the query integrity. However, maintaining a full copy of the entire database might be too costly to an ordinary user, as it requires considerable storage, computing, and bandwidth resources.
For example, the minimum requirements of running a Bitcoin full node include 200GB of free disk space, an unmetered broadband connection with upload speeds of at least 50KB per second, and a running time of 6 hours a day~\cite{Bitcoin_fullnode}. To cater to query users with limited resources, especially mobile users, a more appealing alternative is to delegate the storage and query services to a powerful full node, while the query user only acts as a light node to receive results. Nevertheless, how to ensure the integrity of query results remains a challenge because full nodes are untrusted and that is an intrinsic assumption of the blockchain.

To address the aforementioned query integrity issue, in this paper, we propose a novel framework, called vChain, that employs \emph{verifiable query processing} to guarantee the results' integrity. More specifically, we augment each block with some additional \emph{authenticated data structure} (ADS), based on which an (untrusted) full node can construct and return a cryptographic proof, known as \emph{verification object} (VO), for users to verify the results of each query. The communication between a query user (light node) and a full node is illustrated in Fig.~\ref{fig:blockchainnet}, where $Q$ denotes a query request and $R$ denotes the result~set.

It is worth noting that this vChain framework is inspired by query authentication techniques studied for outsourced databases~\cite{Pang04, Li06,Yang09, Xu_SIGMOD15, Xu_SIGMOD18}. However, there are several key differences that render the conventional techniques inapplicable to blockchain databases. First, the conventional techniques rely on a data owner to sign the ADS using a private key. In contrast, in the blockchain network there is no data owner. Only the miners can append new data to the blockchain by constructing consensus proofs according to the consensus protocol. However, they cannot act as the data owner because they cannot hold the private key and sign the ADS. Second, a conventional ADS is built on a fixed dataset, and such an ADS cannot be efficiently adapted to a blockchain database in which the data are unbounded.
Third, in traditional outsourced databases, new ADSs can always be generated and appended, as needed, to support more queries involving different sets of attributes. However, that would be difficult due to the immutability of the blockchain, where a one-size-fits-all ADS is more desirable to support dynamic query attributes.

Clearly, the design of the ADS is a key issue of the vChain framework. To address this issue, this paper focuses on \emph{Boolean range queries}, which, as illustrated earlier, are commonly found in blockchain applications~\cite{Blockchair, Vaultitude}.
We propose a novel accumulator-based ADS scheme that enables dynamic aggregation over arbitrary query attributes, including both numerical attributes and set-valued attributes. This newly designed ADS is independent of the consensus protocol so that it is compatible with the current blockchain technology. On that basis, efficient verifiable query processing algorithms are developed. We also propose two authenticated indexing structures for intra-block data and inter-block data, respectively, to enable batch verification. To support large-scale subscription queries, we further propose a query indexing scheme that can group similar query requests. To summarize, our contributions made in this paper are as~follows:
\begin{itemize}
    \item To the best of our knowledge, this is the first work on verifiable query processing that leverages built-in ADSs to achieve query integrity for blockchain~databases.
    \item We propose a novel vChain framework, together with a new ADS scheme and two indexing structures that can aggregate intra-block and inter-block data records for efficient query processing and verification.
    \item We develop a new query index that can handle a large number of subscription queries simultaneously.
    \item  We conduct a security analysis as well as an empirical study to validate the proposed techniques. We also address the practical implementation issues.
\end{itemize}

The rest of the paper is organized as follows. Section~\ref{sec:related_work} reviews existing studies on blockchains and verifiable query processing. Section~\ref{sec:problem_definition} introduces the formal problem definition, followed by cryptographic primitives in Section~\ref{sec:prelimiaries}. Section~\ref{sec:basic_solution} presents our basic solution, which is then improved by two indexing structures devised in Section~\ref{sec:auth_index}. The verifiable subscription query is discussed in Section~\ref{sec:query_index}. The security analysis is presented in Section~\ref{sec:security_analysis}. Section~\ref{sec:experiment} presents the  experimental results.
Finally, we conclude our paper in Section~\ref{sec:conclusion}.

\begin{figure}[t]
    \centering
    \resizebox{.7\linewidth}{!}{\begin{tikzpicture}
    \tikzstyle{block node}=[
    rectangle split,
    rectangle split horizontal,
    rectangle split parts=4,
    rectangle split ignore empty parts,
    rectangle split part align=base,
    draw
    ]
    \tikzstyle{blockchain node}=[
    rectangle split,
    rectangle split horizontal,
    rectangle split parts=1,
    rectangle split part align=base,
    draw
    ]

    \tikzstyle{tree node}=[circle, draw, align=center, inner sep=0pt, text centered]
    \tikzstyle{tree}=[
    every node/.style={tree node},
    level/.style={level distance=0.6cm},
    level 1/.style={sibling distance=1.5cm},
    level 2/.style={sibling distance=0.8cm},
    ]

    \node (c0) {$\dots$};
    \node[blockchain node, fill=black!20, scale = 2, right=0.7cm of c0] (c1) { };
    \node[blockchain node, fill=black!20, scale = 2, right=0.7cm of c1] (c2) { };
    \node[blockchain node, fill=black!20, scale = 2, right=0.7cm of c2] (c3) { };
    \node [right=0.7cm of c3] (c4) {$\dots$};

    \draw[-latex] (c1.west) -- (c0.east);
    \draw[-latex] (c2.west) -- (c1.east);
    \draw[-latex] (c3.west) -- (c2.east);
    \draw[-latex] (c4.west) -- (c3.east);

    \node[block node, below=0.5cm of c2.south] (n) {
        \nodepart{one} PreBkHash
        \nodepart{two} TS
        \nodepart{three} ConsProof
        \nodepart{four} MerkleRoot
    };

    \path[tree]
    node[below=0.2cm of n.south] (root) {$H_{r}$}
    child {
            node (node1) {$H_{5}$}
            child { node (node3) {$H_{1}$} }
            child { node (node4) {$H_{2}$} }
        }
    child {
            node (node2) {$H_{6}$}
            child { node (node5) {$H_{3}$} }
            child { node (node6) {$H_{4}$} }
        };

    \draw[-latex] (root.east) -| (n.four south);

    \node[below=0.2cm of node3.south,draw,font=\footnotesize] (tx1) {$\textsf{Tx}_1$};
    \node[below=0.2cm of node4.south,draw,font=\footnotesize] (tx2) {$\textsf{Tx}_2$};
    \node[below=0.2cm of node5.south,draw,font=\footnotesize] (tx3) {$\textsf{Tx}_3$};
    \node[below=0.2cm of node6.south,draw,font=\footnotesize] (tx4) {$\textsf{Tx}_4$};
    \draw[-latex] (tx1.north) -- (node3.south);
    \draw[-latex] (tx2.north) -- (node4.south);
    \draw[-latex] (tx3.north) -- (node5.south);
    \draw[-latex] (tx4.north) -- (node6.south);

    \node[draw,fit=(n)(root)(tx1)(tx4)] (n-box) {};

    \node [left=0.7cm of n] (prev) {$\dots$};
    \node [right=0.7cm of n] (next) {$\dots$};
    \draw [-latex] (next) -- (n -| n-box.east);
    \draw [-latex] (n.west) -- (prev);

    \node[draw,dashed,rounded corners,fit=(c2)] (c2-box) {};
    \draw[dashed] (c2-box.south west) -- (n-box.north west);
    \draw[dashed] (c2-box.south east) -- (n-box.north east);
\end{tikzpicture}}
    \caption{Blockchain Structure}\label{fig:blockchain}
\end{figure}
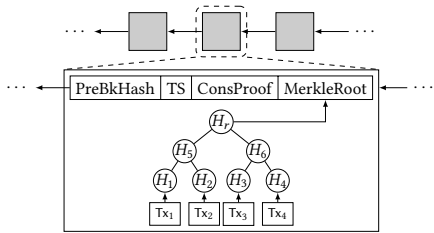

\section{Related Work}\label{sec:related_work}

In this section, we briefly review related studies and discuss relevant techniques.

\textbf{Blockchain.}
Since the introduction of Bitcoin cryptocurrency, blockchain technology has received considerable attention from both academia and industry~\cite{nakamoto2008bitcoin,wood2014ethereum,dinh2018untangling}. A blockchain, which is essentially a special form of \emph{Merkle Hash Tree} (MHT)~\cite{Merkle:1989ds}, is constructed as a sequence of blocks. As shown in Fig.~\ref{fig:blockchain}, each block stores a list of transaction records and an MHT built on top of them. The header of each block consists of four components:
\begin{enumerate*}[label=(\roman*)] 
    \item \textit{PreBkHash}, which is the hash of the previous block;
    \item \textit{TS}, which is the timestamp when the block was created;
    \item \textit{ConsProof}, which is constructed by the miners and guarantees the consensus of the block; and
    \item \textit{MerkleRoot}, which is the root hash of the MHT.
    \end{enumerate*}
The \textit{ConsProof} is usually computed based on the \textit{PreBkHash} and \textit{MerkleRoot}, and varies depending on the consensus protocol. In the widely used \emph{Proof of Work} (PoW) consensus protocol, the \textit{ConsProof} is a \textit{nonce} computed by the miners such that:
\begin{eqnarray*}
\textsf{hash}({PreBkHash}~|~TS~|~{MerkleRoot}~|~nonce)\leq Z
\end{eqnarray*}
where $Z$ corresponds to the mining difficulty. After a miner finds the nonce, it will pack the new block and broadcast it to the entire network. Other miners verify the transaction records and the nonce of the new block and, once verified, append it to the blockchain.

Significant effort has been made to address the various issues of blockchain systems, including system protocols~\cite{Ateniese2014CertifiedB,Garay2014TheBB}, consensus algorithms~\cite{eyal2016bitcoin,pirlea2018mechanising}, security~\cite{dong2017betrayal,camenisch2017practical}, storage~\cite{wang2018forkbase}, and performance benchmarking~\cite{Dinh2017BLOCKBENCHAF}.
Recently, major database vendors, including IBM~\cite{IBM}, Oracle~\cite{Oracle}, and SAP~\cite{SAP}, all have integrated blockchains with their database management systems, and they allow users to execute queries over blockchains through a database frontend. Besides, many startups such as FlureeDB~\cite{FlureeDB}, BigchainDB~\cite{BigchainDB}, and SwarmDB~\cite{SwarmDB}, have been developing blockchain-based database solutions for decentralized applications. However, they generally separate query processing from the underlying blockchain storage and count on trusted database servers for query integrity assurance. In contrast, our proposed vChain solution builds authenticated data structures into the blockchain structure, so that even untrusted servers can be enabled to offer integrity-assured query services.

\textbf{Verifiable Query Processing.} Verifiable query processing techniques have been extensively studied to ensure result integrity against an untrusted service provider (e.g., \cite{Pang04, Li06,Yang09, Xu_SIGMOD15, Xu_SIGMOD18, zhang2017vsql}). Most of the existing studies focus on outsourced databases and there are two typical approaches: supporting general queries using circuit-based verifiable computation (VC) techniques and supporting specific queries using an authenticated data structure (ADS).
The VC-based approach (e.g., SNARKs~\cite{parno2013pinocchio}) can support arbitrary computation tasks but at the expense of a very high and sometimes impractical overhead. Moreover, it entails an expensive preprocessing step as both the data and the query program need to be hard-coded into the proving key and the verification key. To remedy this issue, Ben-Sasson et al.~\cite{Ben-Sasson:2014:SNZ:2671225.2671275} have developed a variant of SNARKs in which the preprocessing step is only dependent on the upper-bound size of the database and query program. More recently, Zhang et al.~\cite{zhang2017vsql} have proposed a vSQL system, which utilizes an interactive protocol to support verifiable SQL queries. However, it is limited to relational databases with a fixed schema.

The ADS-based approach in comparison is generally more efficient as it tailors to specific queries. Our proposed solution belongs to this approach.
Two types of structures are commonly used to serve as an ADS: digital signature and MHT. Digital signatures authenticate the content of a digital message based on asymmetric cryptography. To support verifiable queries, it requires every data record to be signed and hence cannot scale up to large datasets~\cite{Pang04}. MHT, on the other hand, is built on a hierarchical tree~\cite{Merkle:1989ds}. Each entry in a leaf node is assigned a hash digest of a data record, and each entry in an internal node is assigned a digest derived from the child nodes. The data owner signs the root digest of MHT, which can be used to verify any subset of data records. MHT has been widely adapted to various index structures~\cite{Li06,Yang09,Xu_SIGMOD15}. More recently, there have been studies of verifiable queries on set-valued data~\cite{papamanthou2011optimal, Canetti14, Papadopoulos14, zhangexpressive, Xu_TKDE17}.

Another closely related line of research is on verifiable query processing for data streams~\cite{Schroder:2012io, Papamanthou:2013is,Schrder2015VeriStreamA, Papadopoulos2007CADSCA}. However, previous studies~\cite{Papamanthou:2013is,Schrder2015VeriStreamA} focus on \emph{one-time} queries to retrieve the latest version of streamed data. \cite{Papadopoulos2007CADSCA} requires the data owner to maintain an MHT for all data records and suffers from long query latency, which is not suitable for real-time streaming services. On the other hand, subscription queries over data streams have been investigated in~\cite{Chen:SIGMOD13,Thoma16, Yangkan17}. So far, no work has considered the integrity issue for subscription queries over blockchain databases.

\section{Problem Definition}\label{sec:problem_definition}

\begin{figure}[t]
    \begin{center}
        \resizebox{1\linewidth}{!}{\input{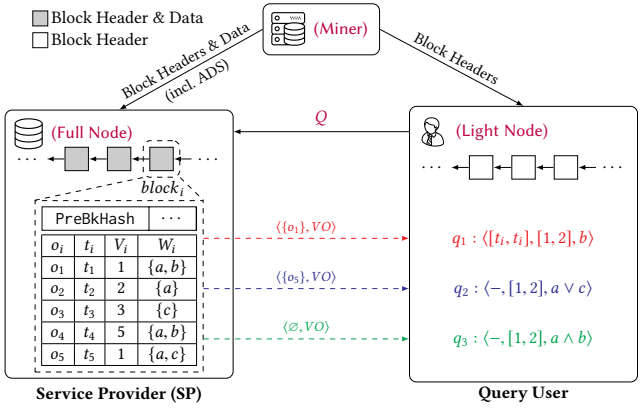}}
    \end{center}
    \caption{System Model of vChain}\label{fig:model}
\end{figure}

As was explained in Section 1, this paper proposes a novel vChain framework and studies verifiable query processing over blockchain databases. Fig.~\ref{fig:model} shows the system model of vChain, which involves three parties:
\begin{enumerate*}[label=(\roman*)] 
    \item \emph{miner},
    \item \emph{service provider} (SP), and
    \item \emph{query user}.
\end{enumerate*}
Both the miner and the SP are full nodes that maintain the entire blockchain database. The query user is a light node that keeps track of the block headers only. The miner is responsible for constructing the consensus proofs and appending new blocks to the blockchain. The SP provides query services to the lightweight user.

The data stored in the blockchain can be modeled as a sequence of blocks of temporal objects $\{o_1$, $o_2$, $\cdots$, $o_n\}$. Each object $o_i$ is represented by $\langle t_i, V_i, W_i \rangle$, where $t_i$ is the timestamp of the object, $V_i$ is a multi-dimensional vector that represents one or more numerical attributes, and $W_i$ is a set-valued attribute. To enable verifiable query processing, an \emph{authenticated data structure} (ADS) is constructed and embedded into each block by the miners (to be detailed in Sections~\ref{sec:basic_solution}-\ref{sec:query_index}). We consider two forms of Boolean range queries: (historical) time-window queries and subscription queries.

\textbf{Time-Window Queries.}
Users may wish to search the records appearing in a certain time period. In such a case, a time-window query can be issued. Specifically, a time-window query is in the form of $q = \langle [t_s, t_e], [\alpha, \beta], \Upsilon \rangle$, where $[t_s, t_e]$ is a temporal range selection predicate for the time period, $[\alpha, \beta]$ is a multi-dimensional range selection predicate for the numerical attributes, and $\Upsilon$ is a monotone Boolean function on the set-valued attribute. As a result, the SP returns all objects such that $\{ o_i = \langle t_i, V_i, W_i \rangle~|~~t_i \in [t_s, t_e] \land V_i \in [\alpha, \beta] \land \Upsilon(W_i) = 1 \}$.
For simplicity, we assume that $\Upsilon$ is in a \emph{conjunctive normal form} (CNF).

\begin{example}
    \itshape%
    In a Bitcoin transaction search service, each object $o_i$ corresponds to a coin transfer transaction. It consists of a transfer amount stored in $V_i$ and a set of sender/receiver addresses stored in $W_i$. A user may issue a query $q = \langle [\text{2018-05}, \text{2018-06}]$, $[10, +\infty]$, $\texttt{send:1FFYc} \land \texttt{receive:2DAAf} \rangle$ to find all of the transactions happening from May to June of 2018 with a transfer amount larger than 10 and being associated with the addresses ``\texttt{send:1FFYc}'' and ``\texttt{receive:2DAAf}''. 
\end{example}

\textbf{Subscription Queries.}
In addition to time-window queries, users can register their interests through subscription queries. Specifically, a subscription query is in the form of $q= \langle -$, $[\alpha, \beta]$, $\Upsilon \rangle$, where $[\alpha, \beta]$ and $\Upsilon$ are identical to the query conditions in time-window queries. In turn, the SP continuously returns all objects such that $\{ o_i = \langle t_i, V_i, W_i \rangle~|~V_i \in [\alpha, \beta] \land \Upsilon(W_i) = 1 \}$ until the query is deregistered.

\begin{example}
    \itshape%
    In a blockchain-based car rental system, each rental object $o_i$ consists of a rental price stored in $V_i$ and a set of textual keywords stored in $W_i$. A user may subscribe to a query $q = \langle -, [200, 250]$, $\text{``Sedan''} \land (\text{``Benz''} \lor \text{``BMW''}) \rangle$ to receive all rental messages that have a price within the range [200, 250] and contain the keywords ``Sedan'' and ``Benz'' or ``BMW''.
\end{example}
Additional examples of time-window queries and subscription queries can be found in Fig.~\ref{fig:model}.

\textbf{Threat Model.} We consider the SP, as an untrusted peer in the blockchain network, to be a potential adversary. Due to various issues such as program glitches, security vulnerabilities, and commercial interests, the SP may return tampered or incomplete query results, thereby violating the expected security of the blockchain. To address such a threat, we adopt verifiable query processing that enables the SP to prove the integrity of query results. Specifically, during query processing, the SP examines the ADS embedded in the blockchain and constructs a \emph{verification object} (VO) that includes the verification information of the results. The VO is returned to the user along with the results. Using the VO, the user can establish the \emph{soundness} and \emph{completeness} of the query results, under the following criteria:
\begin{itemize}
    \item \textbf{Soundness.} None of the objects returned as results have been tampered with and all of them satisfy the query conditions.
    \item \textbf{Completeness.} No valid result is missing regarding the query window or subscription period.
\end{itemize}
The above security notions will be formalized when we perform our security analysis in Section~\ref{sec:security_analysis}.

The main challenge in this model is how to design the ADS so that it can be easily accommodated in the blockchain structure while \emph{cost-effective} VOs (incurring small bandwidth overhead and fast verification time) can be efficiently constructed for both time-window queries and subscription queries. We address this challenge in the next few sections.

\section{Preliminaries}\label{sec:prelimiaries}

This section gives some preliminaries on cryptographic constructs that are needed in our algorithm design.

\textbf{Cryptographic Hash Function.}
A cryptographic hash function $\textsf{hash}(\cdot)$ accepts an arbitrary-length string as its input and returns a fixed-length bit string. It is collision resistant and difficult to find two different messages, $m_1$ and $m_2$, such that $\textsf{hash}(m_1) = \textsf{hash}(m_2)$. Classic cryptographic hash functions include the SHA-1, SHA-2, and SHA-3 families.

\textbf{Bilinear Pairing.} Let $\mathbb{G}$ and $\mathbb{H}$ be two cyclic multiplicative groups with the same prime order $p$. Let $g$ be the generator of $\mathbb{G}$. A bilinear mapping is a function $e: \mathbb{G} \times \mathbb{G} \rightarrow \mathbb{H}$ with the following properties:
\begin{itemize}[noitemsep,partopsep=0pt,topsep=0pt,leftmargin=2em]
    \item \textsf{Bilinearity}: If $u, v \in \mathbb{G}$ and $e(u,v)\in\mathbb{H}$, then $e(u^a, v^b) = {e(u, v)}^{ab}$ for any $u,v$.
    \item \textsf{Non-degeneracy}: $e(g, g) \neq 1$.
\end{itemize}
Bilinear pairing serves as a basic operation for the multiset accumulator as shown later in this paper.

\textbf{$q$-Strong Diffie-Hellman ($q$-SDH) Assumption}~\cite{Boneh04b}.
Let $pub = (p$, $\mathbb{G}$, $\mathbb{H}$, $e$, $g)$ be a bilinear pairing as described above. It states that for all polynomials $q$ and for all probabilistic polynomial-time adversaries \textsf{Adv},
\begin{align*}
    \Pr[ & s \gets \mathbb{Z}_p; \sigma = (pub, g^s, \cdots, g^{s^q});            \\
         & (c, h) \gets \textsf{Adv}(\sigma): h = {e(g, g)}^{1/(c+s)}] \approx 0
\end{align*}

\textbf{$q$-Diffie-Hellman Exponent ($q$-DHE) Assumption}~\cite{Camenisch:2009:ABB:1531954.1531989}.
Let $pub = (p$, $\mathbb{G}$, $g)$ as described above. It states that for all polynomials $q$ and for all probabilistic polynomial-time adversaries \textsf{Adv},
\begin{align*}
    \Pr[ & s \gets \mathbb{Z}_p; \sigma = (pub, g^s, \cdots, g^{s^{q-1}}, g^{s^{q+1}}, \cdots, g^{s^{2q-2}}); \\
         & h \gets \textsf{Adv}(\sigma): h = g^{s^q} ] \approx 0
\end{align*}

\textbf{Cryptographic Multiset Accumulator.}
A multiset is a generalization of a set in which elements are allowed to occur more than once. To represent them in a constant size, a cryptographic mulitset accumulator is a function $\textsf{acc}(\cdot)$, which maps a multiset to an element in some cyclic multiplicative group in a collision resistant fashion~\cite{Xu_TKDE17}.

One useful property of the accumulator is that it can be used to prove set disjoint. It consists of the following probabilistic polynomial-time algorithms:
\begin{itemize}
    \item \textsf{KeyGen}$(1^\lambda) \to (sk, pk)$: On input a security parameter $1^\lambda$, it generates a secret key $sk$ and a public key $pk$.
    \item \textsf{Setup}$(X, pk) \to \textsf{acc}(X)$: On input a multiset $X$ and the public key $pk$, it computes the accumulative value~$\textsf{acc}(X)$.
    \item \textsf{ProveDisjoint}$(X_1, X_2, pk) \to \pi$: On input two multisets $X_1$, $X_2$, where $X_1 \cap X_2 = \emptyset$,  and the public key $pk$, it outputs a proof $\pi$.
    \item \textsf{VerifyDisjoint}$(\textsf{acc}(X_1), \textsf{acc}(X_2), \pi, pk) \to \{0, 1\}$: On input the accumulative values $\textsf{acc}(X_1)$, $\textsf{acc}(X_2)$, a proof $\pi$, and the public key $pk$, it outputs 1 if and only if $X_1 \cap X_2 = \emptyset$.
\end{itemize}

More elaborated constructions of the accumulator and the set disjoint proof will be given in Section~\ref{sec:acc}.

\section{Basic Solution}\label{sec:basic_solution}

To enable verifiable queries in our vChain framework, a naive scheme is to construct a traditional MHT as the ADS for each block and apply the conventional MHT-based authentication methods. However, this naive scheme has three major drawbacks. First, an MHT  supports only the query keys on which the Merkle tree is built. To support queries involving an arbitrary set of attributes, an exponential number of MHTs need to be constructed for each block. Second, MHTs do not work with set-valued attributes. Third, MHTs of different blocks cannot be aggregated efficiently, making it incapable of leveraging inter-block optimization techniques. To overcome these drawbacks, in this section we propose novel authentication techniques based on a new accumulator-based ADS scheme, which transforms numerical attributes into set-valued attributes and enables dynamic aggregation over arbitrary query attributes.

In the following, we start by considering a single object and focusing on the Boolean time-window query for ease of illustration (Sections~\ref{sec:ads_query} and \ref{sec:acc}). We then extend it to the range query condition (Section~\ref{sec:extendrange}). We discuss the batch query processing and verification for multiple objects in Section~\ref{sec:auth_index}. The subscription query is elaborated in Section~\ref{sec:query_index}.

\vspace{-2ex}
\subsection{ADS Generation and Query Processing}\label{sec:ads_query}

For simplicity, this section considers the Boolean query condition on the set-valued attribute $W_i$ only.
We assume that each block stores a single object $o_i = \langle t_i, W_i \rangle$ and use \textit{ObjectHash} to denote \textit{MerkleRoot} in the original block structure (Fig.~\ref{fig:blockchain}).

\begin{figure}[t]
    \centering
    \resizebox{.9\linewidth}{!}{\begin{tikzpicture}
    \tikzstyle{block node}=[
    rectangle split,
    rectangle split horizontal,
    rectangle split parts=5,
    rectangle split ignore empty parts,
    rectangle split part align=base,
    draw
    ]

    \node (prev) {$\dots$};

    \node[block node, right=0.8 of prev] (n) {
        \nodepart{one} PreBkHash
        \nodepart{two} TS
        \nodepart{three} ConsProof
        \nodepart{four} ObjectHash
        \nodepart{five} AttDigest
    };
    \node[scale=1.2,above=0.1 of n] (n-label) {block$_{i}$};

    \node[draw, below=0.3 of n.four south] (o) {$o_{i}$};
    \draw[-latex] (o) -- (n.four south);

    \node[draw,fit=(n)(o)] (n-box) {};

    \begin{pgfonlayer}{background}
        \fill[color=black!20] (n.four split north) rectangle (n.south east);
    \end{pgfonlayer}

    \node[right=0.8 of n] (next) {$\dots$};

    \draw[-latex] (n.west) -- (prev.east);
    \draw[-latex] (next.west) -- (n-box.east |- n.east);
\end{tikzpicture}}
    \caption{Extended Block Structure}\label{fig:block}
\end{figure}
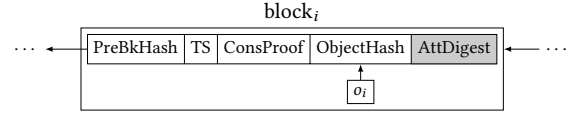

\textbf{ADS Generation.}
Recall that in the proposed vChain framework, an ADS is generated for each block during the mining process. It can be used by the SP to construct a verification object (VO) for each query. To this end, we extend the original block structure by adding an extra field, named \textit{AttDigest}, as shown by the shaded part in Fig.~\ref{fig:block}. Thus, the block header consists of \textit{PreBkHash}, \textit{TS}, \textit{ConsProof}, \textit{ObjectHash}, and \textit{AttDigest}.

To serve as the ADS, \textit{AttDigest} should have three desired properties. First, \textit{AttDigest} should be able to summarize an object's attribute $W_i$ in a way that it can be used to prove whether or not the object matches a query condition. In case of a mismatch, we can just return this digest instead of the whole object. Second, \textit{AttDigest} should be in a constant size regardless of the number of elements in $W_i$. Third, \textit{AttDigest} should be aggregatable to support batch verification of multiple objects within a block or even across blocks (Section~\ref{sec:auth_index}). As such, we propose to use  \textit{multiset accumulator} as \textit{AttDigest}:
\begin{eqnarray} \nonumber
AttDigest_i = \textsf{acc}(W_i) = \textsf{Setup}(W_i, pk)
\end{eqnarray}
While its supported functionalities, including \textsf{ProveDisjoint}($\cdot$) and \textsf{VerifyDisjoint}($\cdot$), have been described in Section~\ref{sec:prelimiaries}, for better readability, we defer detailed constructions to Section~\ref{sec:acc}.

\textbf{Verifiable Query Processing.} Given a Boolean query condition and a data object, there are only two possible outcomes: match or mismatch. The soundness of the first case can be easily verified by returning the object as a result, since its integrity can be authenticated by the \textit{ObjectHash} stored in the block header, which is available to the query user on a light node (recall Fig.~\ref{fig:model}). The challenge lies in how to effectively verify the second case by using \textit{AttDigest}.  As CNF is a Boolean function expressed in a list of AND of OR operators, we can view the Boolean function in CNF as a list of sets. For example, a query condition $\text{``Sedan''} \land (\text{``Benz''} \lor \text{``BMW''})$ is equivalent to two sets: $\{\text{``Sedan''}\}$ and $\{\text{``Benz''}, \text{``BMW''}\}$. Consider a mismatching object $o_i: \{\text{``Van''}, \text{``Benz''}\}$. It is easy to observe that there exists an equivalence set (i.e., $\{\text{``Sedan''}\}$) such that its intersection with the object's attribute is empty. Thus, we can apply \textsf{ProveDisjoint}(\{\text{``Van''}, \text{``Benz''}\}, $\{\text{``Sedan''}\}, pk$) to generate a disjoint proof $\pi$ as the VO for the mismatching object. Accordingly, the user can retrieve $\textit{AttDigest}_i=\textsf{acc}(\{\text{``Van''}, \text{``Benz''}\})$ from the block header and use $\textsf{VerifyDisjoint}(\textit{AttDigest}_i, \textsf{acc}(\{\text{``Sedan''}\})$, $\pi$, $pk)$ to verify the mismatch.
The whole process is detailed in Algorithm~\ref{alg:single-obj-query}.

\begin{algorithm}[t]
    \caption{Verifiable Query on a Single Object}\label{alg:single-obj-query}
    \small
    \SetKwBlock{DO}{ADS Generation (by the miner)}{}
    \SetKwBlock{SP}{VO Construction (by the SP)}{}
    \SetKwFunction{VOConstruction}{VOConstruction}
    \SetKwBlock{User}{Result Verification (by the user)}{}
    \DO{%
        \For{each object $o_i = \langle t_i, W_i \rangle$}{%
         	$\textit{AttDigest}_i \gets \textsf{acc}(W_i)$\;
            Write $\langle \textsf{hash}(o_i), \textit{AttDigest}_i \rangle$ to the block\;
        }
    }
    \SP{%
        \Fn{\VOConstruction{$o_i,q$}}{
            \KwIn{Object $o_i$, Query condition $q = \langle \Upsilon \rangle$}
            \lIf{$o_i$ matches $q$}{Send $o_i$ to the user}
            \Else{
                Interpret $\Upsilon$ as a list of sets $\{\Upsilon_1, \cdots, \Upsilon_\ell\}$, s.t.\ $\Upsilon = \land_{\Upsilon_i \in \{\Upsilon_1, \cdots, \Upsilon_\ell\}} (\lor_{x \in \Upsilon_i} x)$\;
                Find $\Upsilon_i$ such that $\Upsilon_i \in \{\Upsilon_1, \cdots, \Upsilon_\ell\} \land \Upsilon_i \cap W_i = \emptyset$\;
                $\pi \gets \textsf{ProveDisjoint}(W_i, \Upsilon_i, pk)$\;
                Send $\langle \pi, \Upsilon_i \rangle$ to the user\;
            }
        }
    }
    \User{%
    \eIf{$o_i$ matches $q$}{%
        Check $o_i$ w.r.t.\ $\textsf{hash}(o_i)$ from the block header\;
        Check whether $o_i$ matches $q$\;
    }{
        Read $\textit{AttDigest}_i$ from the block header\;
        Run $\textsf{VerifyDisjoint}(\textit{AttDigest}_i, \textsf{acc}(\Upsilon_i), \pi, pk)$\;
        }
    }
\end{algorithm}

It is straightforward to extend the above algorithm to support time-window queries. The process basically finds the corresponding blocks whose timestamp is within the query window and invokes Algorithm~\ref{alg:single-obj-query} repeatedly for each object in these selected blocks. For example, suppose the query Boolean function is $\text{``Sedan''} \land (\text{``Benz''} \lor \text{``BMW''})$. The list of objects that are within the time window includes $o_1$: $\{\text{``Sedan''}$, $\text{``Benz''}\}$, $o_2$: $\{\text{``Sedan''}$, $\text{``Audi''}\}$, $o_3$: $\{\text{``Van''}$, $\text{``Benz''}\}$, and $o_4$: $\{\text{``Van''}$, $\text{``BMW''}\}$. The SP can apply $\textsf{ProveDisjoint}(\cdot)$ for $o_2$, $o_3$, and $o_4$ to prove that they do not match the condition $\text{``Benz''} \lor \text{``BMW''}$, $\text{``Sedan''}$, and $\text{``Sedan''}$, respectively. As for $o_1$, the SP will return the object directly since it is a~match.

\subsection{Constructions of Multiset Accumulator}\label{sec:acc}
We now discuss two possible constructions of the multiset accumulator used as \textit{AttDigest} in Section~\ref{sec:ads_query}. Each construction has its own advantage and disadvantage, and is suitable to different application scenarios as we will see in Section~\ref{sec:experiment}.

\subsubsection{Construction 1}

We first present a construction proposed in~\cite{papamanthou2011optimal}, which is based on bilinear pairing and $q$-SDH assumption. It consists of the following algorithms.

\textsf{KeyGen}$(1^\lambda) \to (sk, pk)$:
Let $(p$, $\mathbb{G}$, $\mathbb{H}$, $e$, $g)$ be a bilinear pairing. Choose a random value $s \gets \mathbb{Z}_p$.
The secret key is $sk = (s)$ and the public key is $pk = (g, g^s, g^{s^2}, \cdots, g^{s^q})$.

\textsf{Setup}$(X, pk) \to \textsf{acc}(X)$:
The accumulative value for a multiset $X = \{ x_1, \cdots, x_n \}$ is $\textsf{acc}(X) = g^{P(X)} = g^{\prod_{x_i \in X}{(x_i + s)}}$. Owing to the property of the polynomial interpolation, it can be computed without knowing the secret key.

\textsf{ProveDisjoint}$(X_1, X_2, pk) \to \pi$:
According to the extended Euclidean algorithm, if $X_1 \cap X_2 = \emptyset$, there exist two polynomials $Q_1, Q_2$ such that $P(X_1)Q_1 + P(X_2)Q_2 = 1$. As such, if $X_1 \cap X_2 = \emptyset$, the proof can be computed as $\pi = (F_1^*, F_2^*) = (g^{Q_1}, g^{Q_2})$.

\textsf{VerifyDisjoint}$(\textsf{acc}(X_1), \textsf{acc}(X_2), \pi, pk) \to \{0, 1\}$:
To verify the proof, the verifier interprets $\pi$ as $(F_1^*, F_2^*)$. The proof is valid if and only if the following constraint holds:
\begin{align*}
    e(\textsf{acc}(X_1), F_1^*) \cdot e(\textsf{acc}(X_2), F_2^*) \stackrel{?}{=} e(g, g).
\end{align*}

\subsubsection{Construction 2}
\label{sec:construction2}
Inspired by~\cite{zhangexpressive}, the second construction is proposed to introduce two additional \textsf{Sum}($\cdot$) and \textsf{ProofSum}($\cdot$) primitives, which allow the aggregation of multiple accumulative values or set disjoint proofs. It is based on bilinear pairing and $q$-DHE assumption and consists of the following algorithms.

\textsf{KeyGen}$(1^\lambda) \to (sk, pk)$:
Let  $(p$, $\mathbb{G}$, $\mathbb{H}$, $e$, $g)$ be a bilinear pairing. Choose a random value $s \gets \mathbb{Z}_p$.
The secret key is $sk = (s)$ and the public key is $pk = (g$, $g^s$, $g^{s^2}$, $\cdots$, $g^{s^{q-1}}$, $g^{s^{q+1}}$, $\cdots$, $g^{s^{2q-2}})$.

\textsf{Setup}$(X, pk) \to \textsf{acc}(X)$:
The accumulative value for a multiset $X = \{ x_1, \cdots, x_n \}$ is $\textsf{acc}(X) = (d_A(X), d_B(X))$, where $d_A(X) = g^{A(X)} = g^{\sum_{x_i \in X}{s^{x_i}}}$ and $d_B(X) = g^{B(X)} = g^{\sum_{x_i \in X}{s^{q-x_i}}}$. Similar to the first construction, it can also be computed without knowing the secret key using the polynomial interpolation.

\textsf{ProveDisjoint}$(X_1, X_2, pk) \to \pi$:
It is easy to see that if $X_1 \cap X_2 = \emptyset$, then $C s^q \notin A(X_1) B(X_2)$, where $C$ is a non-zero constant. As such, if $X_1 \cap X_2 = \emptyset$, the proof can be computed as $\pi = g^{A(X_1)B(X_2)} = g^{(\sum_{x_{i} \in X_1} s^{x_{i}})(\sum_{x_{j} \in X_2 } s^{q-x_{j}})}$.

\textsf{VerifyDisjoint}$(\textsf{acc}(X_1), \textsf{acc}(X_2), \pi, pk) \to \{0, 1\}$:
To verify the proof, the verifier interprets $\textsf{acc}(X_1)$ and $\textsf{acc}(X_2)$ as $(d_A(X_1), d_B(X_1))$ and $(d_A(X_2), d_B(X_2))$, respectively. The proof is valid if and only if the following constraint holds:
\begin{align*}
    e(d_A(X_1), d_B(X_2)) \stackrel{?}{=} e(\pi, g).
\end{align*}

\textsf{Sum}$(\textsf{acc}(X_1), \textsf{acc}(X_2), \cdots, \textsf{acc}(X_n)) \to \textsf{acc}(\sum_i^n X_i)$:
On input of multiple accumulative values $\textsf{acc}(X_1), \cdots, \textsf{acc}(X_n)$, it outputs the accumulative value for the multiset $\sum_i^n X_i$. $\textsf{acc}(\sum_i^n X_i) = (d_A(\sum_i^n X_i), d_B(\sum_i^n X_i))$, where $d_A(\sum_i^n X_i) = \prod_i^n d_A(X_i)$ and $d_B(\sum_i^n X_i) = \prod_i^n d_B(X_i)$.

\textsf{ProofSum}$(\langle \pi_1, X_{\pi_1,1}, X_{\pi_1,2} \rangle$, $\cdots$, $\langle \pi_n, X_{\pi_n,1} , X_{\pi_n,2} \rangle) \to \pi'$:
On input of multiple set disjoint proofs $\pi_1 = \textsf{ProveDisjoint}(\allowbreak X_{\pi_1,1}$, $X_{\pi_1,2}$, $pk)$, $\cdots$, $\pi_n = \textsf{ProveDisjoint}(X_{\pi_n,1}$, $X_{\pi_n,2}, pk)$, it outputs an aggregate proof $\pi'=\sum_{i=1}^n \pi_i$, if and only if $X_{\pi_1,2}=X_{\pi_2,2}=\cdots = X_{\pi_n,2}$.

Compared with Construction 1, Construction 2 supports the aggregation of multiple accumulative values or set disjoint proofs, which can be used by the online batch verification method in Section~\ref{sec:online_batch}. However, it incurs a much larger key size. In particular, the public key size in Construction 1 is linear to the largest multiset size, whereas in Construction~2, the public key size is linear to the largest possible value of the attributes in the system. In real-life applications, the common practice is to use a cryptographic hash function to encode each attribute value into an integer number, which is then accepted by the accumulator. Since the value returned by a typical hash function is in several hundreds of bits, it is costly to generate and publish the public key with such a scale in advance. To remedy this issue, we may introduce a trusted oracle, which owns the secret key and is responsible to answer requests of the public key. Such an oracle can be acted by a trusted third party or be implemented utilizing secure hardware like SGX\@.

\subsection{Extension to Range Queries}
\label{sec:extendrange}
The previous sections mainly consider the Boolean queries on the set-valued attribute $W_i$. In many scenarios, the user may also apply range conditions on the numerical attributes~$V_i$. To tackle this problem, we propose a method that transforms numerical attributes into set-valued attributes. Then, a range query can be mapped to a Boolean query accordingly.

The idea goes as follows. First, we represent each numerical value in the binary format. Next, we transform a numerical value into a set of binary prefix elements (denoted as function \textsf{trans}($\cdot$)).
For example, a value $4$ can be represented in the binary format $100$. Thus, it can be transformed into a prefix set, i.e., $\textsf{trans}(4) = \{1*, 10*, 100\}$, where $*$ denotes the wildcard matching operator.
Similarly for a numerical vector, we can apply the above procedure for each dimension. For example, a vector $(4, 2)$ has the binary format $(100, 010)$. Thus, its transformed prefix set is $\{{1*}_1, {10*}_1, {100}_1, {0*}_2, {01*}_2, {010}_2\}$. Note that here each element has a subscript notation (i.e.,\ $_1$ and $_2$), which is used to distinguish the binary values in the different dimensions of the vector.

Next, we transform a range query condition into a monotone Boolean function, by using a binary tree built over the entire binary space  (e.g., Fig.~\ref{fig:transform} shows a tree of single-dimension space [0,7]). Specifically, for a single-dimension range $[\alpha, \beta]$, we first represent $\alpha$ and $\beta$ in its binary format. Next, we view $\alpha$ and $\beta$ as two leaf nodes in the tree.
Finally, we find the minimum set of  tree nodes to exactly cover the whole range $[\alpha, \beta]$. The transformed Boolean function is a function concatenating each element in the set using OR ($\lor$) semantic. For example, for a query range $[0, 6]$, we can find its transformed Boolean function as $0* \lor 10* \lor 110$ (see the gray nodes in Fig.~\ref{fig:transform}). As discussed in Section~\ref{sec:ads_query}, the equivalence set of this Boolean function is $\{0*, 10*, 110\}$.
Similarly, in the case of a multi-dimensional range, the transformed Boolean function is the one concatenating the partial Boolean function for each dimension using AND ($\land$) semantic. For example, a query range $[(0, 3), (6, 4)]$ can be transformed to $({0*}_1 \lor {10*}_1 \lor {110}_1) \land ({011}_2 \lor {100}_2)$, with equivalence sets of $\{0*_1, 10*_1, 110_1\}$ and $\{011_2, 100_2\}$.

\begin{figure}[t]
    \centering
    \resizebox{0.67\linewidth}{!}{\begin{tikzpicture}
    \tikzstyle{tree node}=[circle, draw, align=center, inner sep=0pt, text centered]
    \tikzstyle{tree}=[
    every node/.style={minimum size=6mm,tree node},
    level/.style={level distance=0.8cm},
    level 1/.style={sibling distance=3.5cm},
    level 2/.style={sibling distance=1.8cm},
    level 3/.style={sibling distance=0.8cm},
    ]
    \path[tree]
    node[] (root) {$*$}
    child {
        node[fill=black!20] (n1) {$0*$}
        child { node (n3) {$00*$}
            child{
                node(n7) {$000$}}
            child{	node(n8) {$001$}}}
        child { node (n4) {$01*$}
            child{
                node(n9){$010$}}
            child{	node(n10){$011$}}
        }
    }
    child {
        node (n2) {$1*$}
        child { node[fill=black!20] (n5) {$10*$}
            child{	node(n11){$100$}}
            child{	node(n12){$101$}}
        }
        child { node (n6) {$11*$}
            child{	node[fill=black!20](n13){$110$}}
            child{	node(n14){$111$}}
        }
    };
    \draw[|-|, red] ($(n7.south)+(-4mm,-2mm)$) to ($(n13.south)+(4mm,-2mm)$);
    \node[below=2.6 of root]{$q=[0,6]$};
\end{tikzpicture}}
    \caption{Example of Transformation}\label{fig:transform}
\end{figure}
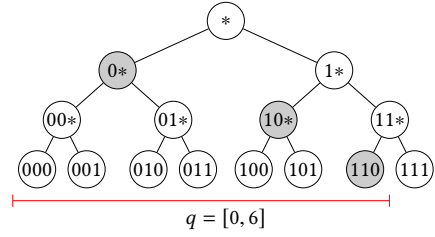

With the above transformations, a query of whether or not a numerical value $v_i$ is in a range $[\alpha, \beta]$ becomes a Boolean query of $v_i$'s transformed prefix set against $[\alpha, \beta]$'s equivalence sets. In the above examples, $4 \in [0, 6]$ since  $\{1*, 10*, 100\} \cap \{0*, 10*, 110\} = \{10*\} \neq \emptyset$; $(4,2) \notin [(0$, $3)$, $(6$, $4)]$ since there exists some equivalence set $\{011_2, 100_2\}$ such that $\{011_2, 100_2\} \cap \{{1*}_1, {10*}_1,$ ${100}_1, {0*}_2, {01*}_2, {010}_2\} = \emptyset$.

Thanks to the data transformation technique, in the sequel we unify the two types of query conditions into a uniform Boolean query condition on the set-valued attribute. More specifically, for each data object $\langle t_i, V_i, W_i \rangle$, it is transformed into a tuple $\langle t_i, W_i' \rangle$, where $W_i' = \textsf{trans}(V_i) + W_i$; and a query $q = \langle [t_s, t_e], [\alpha, \beta], \Upsilon \rangle$ is transformed into $\langle [t_s, t_e], \Upsilon ' \rangle$, where $\Upsilon ' = \textsf{trans}([\alpha, \beta]) \land \Upsilon$. As such, the query results are $\{ o_i = \langle t_i, W_i' \rangle~|~t_i \in [t_s, t_e] \land \Upsilon'(W_i') = 1 \}$.

\vspace{-2ex}
\section{Batch Verification}\label{sec:auth_index}

In this section, we discuss how to boost the query performance via batch verification. We first  introduce two authenticated indexing structures, namely intra-block index (Section~\ref{sec:intra_index}) and inter-block index (Section~\ref{sec:inter_index}), followed by an online batch verification method (Section~\ref{sec:online_batch}).  All these techniques allow the SP to prove mismatching objects in a~batch.

\subsection{Intra-Block Index}\label{sec:intra_index}

In the previous discussion, we assume that each block stores only one object for simplicity. In general, each block often stores multiple objects. Naively, we can apply the single-object algorithm repeatedly for each object to ensure query integrity, which however incurs a verification complexity linear to the number of objects. Further, it can be observed that if two objects share some common attribute value, they may mismatch some queries due to the same partial query condition. Therefore, to reduce the proofing and verification overhead, we propose an intra-block index which can aggregate multiple objects and improve performance.

\begin{figure}[t]
    \centering
    \resizebox{0.95\linewidth}{!}{\begin{tikzpicture}
    \tikzstyle{block node}=[
    rectangle split,
    rectangle split horizontal,
    rectangle split parts=4,
    rectangle split ignore empty parts,
    rectangle split part align=base,
    draw
    ]
    \tikzstyle{tree node}=[circle, draw, align=center, inner sep=0pt, text centered]
    \tikzstyle{tree}=[
    every node/.style={tree node},
    level/.style={level distance=0.5cm},
    level 1/.style={sibling distance=2cm},
    level 2/.style={sibling distance=1cm},
    ]

    \node[block node] (block) {
        \nodepart{one} PreBkHash
        \nodepart{two} TS
        \nodepart{three} ConsProof
        \nodepart{four} MerkleRoot
    };

    \path[tree]
    node[below=0.3cm of block.four south] (root) {$N_r$}
    child {
        node (n1) {$N_5$}
        child { node (n3) {$N_1$} }
        child { node[fill=black!20] (n4) {$N_2$} }
    }
    child {
        node (n2)[fill=black!20] {$N_6$}
        child { node (n5) {$N_3$} }
        child { node (n6) {$N_4$} }
    };

    \draw[-latex] (root.north) -- (block.four south);

    \node[block node, scale=0.9, left=0.9cm of root] (root-data) {
        \nodepart{one} \textit{hash}$_r$
        \nodepart{two} $W_r$
        \nodepart{three}\textit{AttDigest}$_r$
    };
    \draw[dashed,-latex] (root-data) -- (root);

    \node[block node, scale=0.9, left=0.5cm of n3] (n3-data) {
        \nodepart{one} \textit{hash}$_1$
        \nodepart{two} $W_1$
        \nodepart{three} \textit{AttDigest}$_1$
    };
    \draw[dashed,-latex] (n3-data) -- (n3);

    \node[draw, scale=0.9, below=0.3cm of n3-data.one south, fill=black!20] (o1) {$o_1$};
    \draw[-latex] (o1) -- (n3-data.one south);

    \node[draw,fit=(block)(n6)(root-data)(n3-data)(o1)] (box) {};
    \node[scale=1.2,above=0.05cm of box] (block-label) {block$_i$};

    \node[left=1cm of block] (prev) {$\dots$};
    \node[right=1.5cm of block] (next) {$\dots$};
    \draw[-latex] (block.west) -- (prev.east);
    \draw[-latex] (next.west) -- (box.east |- block.east);
    \node[below=0.1cm of box] (db) {
        \begin{tabu}{|c|c|l|}
            \hline
            \rowfont{\bfseries}
            Node & Object  &
            Set Attributes\\
            \hline
            $N_1$  & $o_1$  & $W_1=\{\text{``Sedan''}, \text{``Benz''}\}$ \\
            \hline
            $N_2$  & $o_2$  & $W_2=\{\text{``Sedan''}, \text{``Audi''}\}$ \\
            \hline
            $N_3$  & $o_3$  & $W_3=\{\text{``Van''}, \text{``Benz''}\}$ \\
            \hline
            $N_4$  & $o_4$  & $W_4=\{\text{``Van''}, \text{``BMW''}\}$ \\
            \hline
        \end{tabu}
    };
\end{tikzpicture}}
    \caption{Intra-Block Index}\label{fig:intra_index}
\end{figure}
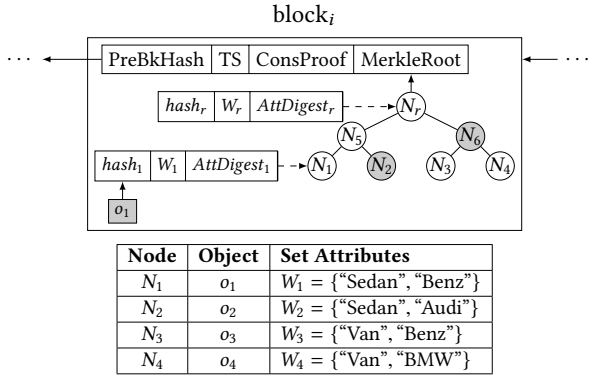

Fig.~\ref{fig:intra_index} shows a block of the blockchain with the intra-block index. It organizes the \textit{ObjectHash} and \textit{AttDigest} of each object into a binary Merkle tree. The block header consists of the following components: \textit{PreBkHash}, \textit{TS}, \textit{ConsProof}, and \textit{MerkleRoot}, where \textsf{MerkleRoot} is the root hash of the binary Merkle tree. Each tree node has three fields: child hash (denoted by \textit{hash}$_i$, and is used to form the MHT), attribute multiset (denoted by $W_i$), and attribute multiset's accumulative value (denoted by \textit{AttDigest}$_i$). They are computed from the child nodes as follows.

\begin{definition}[Intra-Block Index Non-Leaf Node]
    Let $\textsf{hash}(\cdot)$ be a cryptographic hash function, `$|$' be the string concatenation operator, $\textsf{acc}(\cdot)$ be the multiset accumulator, $n_l$ and $n_r$ be the left and right children of node $n$, respectively. The fields for a non-leaf node $n$ are defined as:
    \begin{itemize}
  \item     $ W_n = W_{n_l} \cup W_{n_r} $
  \item      $ {AttDigest}_n  = \textsf{acc}(W_n) $
  \item      $ {hash}_n       = \textsf{hash}(\textsf{hash}({hash}_{n_l}~|~{hash}_{n_r})~|~{AttDigest}_n ) $
    \end{itemize}
\end{definition}

\begin{definition}[Intra-Block Index Leaf Node]
    The fields for a leaf node are identical to those for the underlying object.
\end{definition}

When building the intra-block index, we want to achieve the maximum proofing efficiency. That is, we aim to maximize the chance of pruning the mismatching objects together during query processing. On the one hand, this means that we should find a clustering strategy such that given a user's query, the chance that a node mismatches the query is maximum. In another words, we strive to maximize the similarity of the objects under each node. On the other hand, a balanced tree is preferred since it can improve the query efficiency. Thus, we propose that the intra-block index is built based on the block's data objects in a bottom-up fashion (by the blockchain miners). First, each data object in the block is assigned to a leaf node. Next, the leaf nodes which yield the maximum Jaccard  similarity $\frac{|W_{n_l} \cap W_{n_r}|}{|W_{n_l} \cup W_{n_r}|}$ are iteratively merged. The two merged tree nodes are used to create a new non-leaf node in the upper level. This process is repeated in each level until the root node is created. Finally, the \textit{MerkleRoot} assigned by $hash_r$ is written as one of the components of the block header. Algorithm~\ref{alg:intraconstruct} shows the procedure in~detail.

\begin{algorithm}[t]
    \caption{\small Intra-Index Construction (by the miners)}\label{alg:intraconstruct}
    \small

    \SetKwFunction{Intra}{BuildIntraIndex}
    \Fn{\Intra{$nodes$}}{%
    \KwIn{list of leaf nodes $nodes$}
    	\While{$nodes.len>1$}{
    		$newnodes\gets [ ]$\;
    		\While{$nodes.len>1$}{
    		$n_l \gets \underset{n}{\operatorname{arg\,max}}\ |W_{n}|$;
    			$nodes$.delete($n_l$)\;
    		$n_r \gets \underset{n}{\operatorname{arg\,max}}\ \frac{|W_{n_l} \cap W_n|}{|W_{n_l} \cup W_n|}$;
    			$nodes$.delete($n_r$)\;
    			$W_{n}\gets W_{n_{l}}\cup W_{n_{r}}$\;
    			${AttDigest}_{n}\gets \textsf{acc}(node.W)$\;
    			${h}_{n}\gets\textsf{hash}(\textsf{hash}({hash}_{n_{l}}|{hash}_{n_{r}})|{AttDigest}_{n})$\;
    			$newnodes$.add($\langle n_l, n_r, {h}_n, W_n, {AttDigest}_n \rangle$)\;
    		}
    		$nodes\gets newnodes+nodes$\;
    	}
    $\textsf{root} \gets nodes[0]$\;
    $MerkleRoot\gets hash_r$\;
    }
\end{algorithm}
With the above intra-block index, the SP can process a query as a tree search. Starting from the root node, if the attribute multiset of the current node fulfills the query condition, its subtree will be further explored. Also, the corresponding \textit{AttDigest} is added to the VO, which will be used to reconstruct the \textit{MerkleRoot} during result verification. On the other hand, if the multiset does not satisfy the query condition, it means that all the underlying objects are mismatches. In this case, the SP will invoke $\textsf{ProveDisjoint}(\cdot)$ with the corresponding $\textit{AttDigest}$ to generate a mismatch proof. Upon reaching a leaf node, the object whose multiset satisfies the query condition is a matching object and will be returned as a query result. Algorithm~\ref{alg:intra-index-query} shows the VO construction using the intra-block index.

For illustration, we use the same set of objects as discussed in Section~\ref{sec:ads_query}. The intra-block index is shown in Fig.~\ref{fig:intra_index}. The Boolean query from the user is $\text{``Sedan''} \land (\text{``Benz''} \lor \text{``BMW''})$.
The query process simply traverses the index from the root node to the leaf nodes. The query result is $\{o_1\}$. The VO returned by the SP includes $\{\langle{AttDigest}_r\rangle$, $\langle {AttDigest}_5\rangle$, $\langle {hash}_2$, $\pi_2$, $\text{\{``Audi''\}}$, ${AttDigest}_2\rangle$, $\langle{hash}_6$, $\pi_6$, $\text{\{``Van''\}}$,\linebreak ${AttDigest}_6\rangle\}$. Here $\pi_2$ and $\pi_6$ are two disjoint proofs of the mismatching nodes $N_2$ and $N_6$ (shaded in Fig.~\ref{fig:intra_index}), respectively. Note that ${AttDigest}_r$ and ${AttDigest}_5$ will only be used to reconstruct the \textit{MerkleRoot} during result verification.
On the user side, the mismatch verification works by invoking \textsf{VerifyDisjoint($\cdot$)} using the \textit{AttDigest}, the disjoint set, and the proof $\pi$ in the VO. Further, in order to verify the result soundness and completeness, the user is required to reconstruct the \textit{MerkleRoot} and compare it with the one read from the block header. In our example, firstly, \textsf{VerifyDisjoint($\cdot$)} is invoked using $\langle\pi_2$, ${AttDigest}_2$, $\text{\{``Audi''\}}\rangle$ and $\langle\pi_6$, ${AttDigest}_6$, $\text{\{``Van''\}}\rangle$ to prove that nodes $N_2$ and $N_6$ indeed mismatch the query. After that, the user computes $\textsf{hash}(o_1)$ using the returned result, and ${hash}_5=\textsf{hash}(\textsf{hash}(o_1)~|~{hash}_2~|~ {AttDigest}_5)$, ${hash}_r=\textsf{hash}({hash}_5$ $|~{hash}_6~|~{AttDigest}_r)$ based on the VO. Finally, the user checks the newly computed ${hash}_r$ against the \textit{MerkleRoot} in the block header.

\begin{algorithm}[t]
    \caption{Query w. Intra-Index (by the SP)}\label{alg:intra-index-query}
    \small
    \SetKw{is}{is}
    \SetKwFunction{IntraIndexQuery}{IntraIndexQuery}
    \Fn{\IntraIndexQuery{$root,q$}}{%
        \KwIn{Intra-Index root $root$, Query condition $q = \langle \Upsilon \rangle$}
        \KwOut{Query Result $R$, Verification Object VO}
        Create an empty queue $queue$\;
        $queue$.enqueue($root$)\;
        \While{$queue$ \is not empty}{
        	$n\gets queue.\textsf{dequeue}()$\;
        	\eIf{$W_n \ matches \ q$ }{
        		\lIf{$n$ \is $a\ leaf\ node$}
        		{Add $o_n$ to $R$}
        		\Else{
                    Add $\langle {AttDigest}_n \rangle$ to VO\;
                    $queue$.enqueue($n$.children)\;}
        	}
        	{%
        		Find query condition set $\Upsilon_i$ w.r.t.\ $W_n$ (see Alg.~\ref{alg:single-obj-query})\;
            	$\pi \gets \textsf{ProveDisjoint}(W_n, \Upsilon_i, pk)$\;
            	add $\langle {hash}_n, \pi, \Upsilon_i, {AttDigest}_n\rangle$ to VO\;
        	}
        }
    \Return{$\langle R,\textrm{VO} \rangle$}\;
    }
\end{algorithm}

\subsection{Inter-Block Index}\label{sec:inter_index}

Besides similar objects within the same block, the objects across blocks
may also share similarity and mismatch a query due to the same reason. Based on this observation, we build an inter-block index that uses a skip list to further optimize the query performance.

\begin{figure*}[t]
    \centering
    \resizebox{0.95\linewidth}{!}{\begin{tikzpicture}
    \tikzstyle{block node}=[
    rectangle split,
    rectangle split horizontal,
    rectangle split parts=5,
    rectangle split ignore empty parts,
    rectangle split part align=base,
    draw
    ]
    \tikzstyle{list node}=[
    rectangle split,
    rectangle split,
    rectangle split parts=4,
    rectangle split ignore empty parts,
    rectangle split part align=base,
    draw
    ]

    \node[block node ] (block0) {
        \nodepart{one} PreBkHash
        \nodepart{three} MerkleRoot
        \nodepart{four} SkipListRoot
    };

    \node[list node, scale=0.95, below=0.6cm of block0.four south] (skiplist0) {
        \nodepart{one} $L_2$
        \nodepart{two} $L_4$
        \nodepart{four} $\dots$
    };

    \draw[-latex] (skiplist0) -- (block0.four south);

    \node[block node, scale=0.8, left=0.8cm of skiplist0.one west] (skiplist0-data1) {
        \nodepart{one} \textit{PreSkippedHash}$_{L_2}$
        \nodepart{two} $W_{L_2}$
        \nodepart{three} \textit{AttDigest}$_{L_2}$
    };
    \draw[dashed,-latex] (skiplist0-data1.three east) -- (skiplist0.one west);

    \node[block node, scale=0.8,  left=0.8cm of skiplist0.two west] (skiplist0-data2) {
        \nodepart{one} \textit{PreSkippedHash}$_{L_4}$
        \nodepart{two} $W_{L_4}$
        \nodepart{three} \textit{AttDigest}$_{L_4}$
    };
    \draw[dashed,-latex] (skiplist0-data2.three east) -- (skiplist0.two west);

    \node[scale=0.8,  left=0.8cm of skiplist0.two west] (skiplist0-data3) {$\dots$};

    \node[draw,fit=(block0)(skiplist0)(skiplist0-data1)(skiplist0-data3)] (box0) {};
    \node[scale=1.2,above=0.05cm of box0] (block0-label) {block$_i$};

    \foreach \i/\prev/\label in {1/0/{i-2},2/1/{i-4}}{%
        \node[left=1cm of block\prev] (omit\i) {$\dots$};

        \node[block node, left=1cm of omit\i] (block\i) {
            \nodepart{one} PreBkHash
            \nodepart{two} $\dots$
            \nodepart{three} $\dots$
        };
        \node[below=0.2cm of block\i.three south] (block\i-content) {$\dots$};
        \draw[-latex] (block\i-content) -- (block\i.three south);

        \node[draw,fit=(block\i)(block\i-content)] (box\i) {};
        \node[scale=1.2,above=0.05cm of box\i] (block\i-label) {block$_{\label}$};

        \draw[-latex] (block\prev.west) -- (omit\i.east);
        \draw[-latex] (omit\i.west) -- (box\i.east |- block\i.east);
    }

    \node[left=1cm of block2] (prev) {$\dots$};
    \node[right=1cm of block0] (next) {$\dots$};
    \draw[-latex] (block2.west) -- (prev.east);
    \draw[-latex] (next.west) -- (box0.east |- block0.east);
    \draw[-latex] (skiplist0-data1.west) -| (box1.south);
    \draw[-latex] (skiplist0-data2.west) -| (box2.south);
\end{tikzpicture}}
    \caption{Inter-Block Index}\label{fig:inter_index}
\end{figure*}
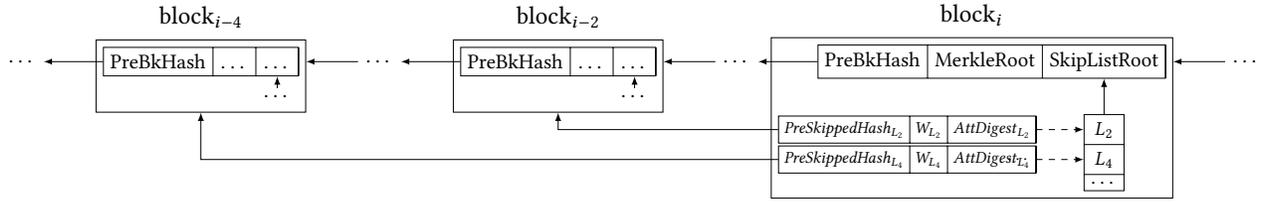

As shown in Fig.~\ref{fig:inter_index}, the inter-block index consists of multiple skips, each of which skips an exponentially number of previous blocks. For example, the list may skip previous $2, 4, 8, \cdots$ blocks. For each skip, it maintains three components: the hash of all skipped blocks (denoted by \linebreak $\textit{PreSkippedHash}_{L_k}$), the sum of the attribute multisets for the skipped blocks (denoted by $W_{L_k}$), and the corresponding accumulative value w.r.t.\ $W_{L_k}$ (denoted by $\textit{AttDigest}_{L_k}$). Note that here we use the summation of attribute multisets to enable online aggregate authentication in Section~\ref{sec:query_index}. Finally, the inter-block index is written into the block using an extra field \textit{SkipListRoot}, which is defined as:
\begin{itemize}
\item  $\textit{SkipListRoot} = \textsf{hash}({hash}_{L_2}~|~{hash}_{L_4}~|~{hash}_{L_{8}} |\cdots) $
\item ${hash}_{L_k} = \textsf{hash}(\textit{PreSkippedHash}_{L_k}~|~\textit{AttDigest}_{L_k}) $
 \item  $\textit{AttDigest}_{L_k} = acc(W_{L_{k}})$
 \item  $ W_{L_{k}} =\textstyle\sum_{j=i-k+1}^{i} W_{j} $
\end{itemize}

During the query processing, an eligible skip may be used to represent multiple blocks which do not contribute to query results due to the same reason of mismatching. As the user can avoid accessing these skipped blocks, the verification cost can be reduced.

\begin{algorithm}[t]
    \caption{Query w. Inter-Index (by the SP)}\label{alg:inter-index-query}
    \small
    \SetKw{to}{to}
    \SetKw{go}{go}
    \SetKw{from}{from}
    \SetKw{is}{is}
    \SetKw{not}{not}
    \SetKw{and}{and}
    \SetKwFunction{InterIndexQuery}{InterIndexQuery}
    \Fn{\InterIndexQuery{$block,q$}}{%
        \KwIn{Current Block $block$, Query condition $q = \langle \Upsilon \rangle$}
        $FindJump\gets False$\;
        \For{$L_{i}$ \from $L_{max}$ \to $L_{min}$ of $block$.SkipList}{
            \If{$W_{L_i}$ does not match $q$ \and $FindJump\neq True$}{
                Find query condition set $\Upsilon_i$ w.r.t.\ $W_n$\;
                $\pi \gets \textsf{ProveDisjoin}(W_{L_i}, \Upsilon_i, pk)$\;
                add $\langle \textit{PreSkippedHash}_{L_i},\pi,\Upsilon_i, \textit{AttDigest}_{L_{i}}\rangle$ and all $hash_{L_{j}}\in SkipList,j\neq i$ to VO\;
                $block\gets Skip(i)$; $FindJump\gets True$\;
            }
        }
        \If{$FindJump=False$}
        {\IntraIndexQuery{$block.root,q$}\;
            $block\gets block.prev$;
        }
        \InterIndexQuery{$block,q$}\;

    }
\end{algorithm}

Algorithm~\ref{alg:inter-index-query} shows the query processing procedure with the inter-block index. We start with the latest block in the query time window.
We iterate the skip list from the maximum skip to the minimum skip.
If the multiset of a skip $W_{L_{i}}$ does not match the query condition, it means that all the skipped blocks between the current block and the previous $i$-th block do not contain matching results. Therefore,  \textsf{ProveDisjoint}($\cdot$) is invoked and output the mismatch proof $\pi_i$. Then, $\langle \textit{PreSkippedHash}_{L_{i}}$, $\pi_i$, $\Upsilon_{i}$, $\textit{AttDigest}_{L_{i}} \rangle$ are added to the VO. The user can use this proof to verify that the skipped blocks indeed mismatch the query. Meanwhile, other hashes except $hash_{L_{i}}$ are also added to the VO.
If we fail to find mismatching blocks during the iteration, the function \IntraIndexQuery{$\cdot$} (Algorithm~\ref{alg:intra-index-query}) is invoked for the current block and then the previous block is examined next. If we successfully find a mismatch skip, the corresponding preceding block will be examined next. The function \InterIndexQuery{$\cdot$} is invoked recursively until we complete checking all the blocks within the query window. Note that we can combine the intra-block index and inter-block index to maximize performance since they are not in conflict.

\subsection{Online Batch Verification} \label{sec:online_batch}

Recall that the proposed intra-block index attempts to cluster the objects of the same block in a way to maximize the proofing efficiency of mismatching objects. Nevertheless, some objects/nodes indexed in different blocks or even different subtrees of the same block may also share the same reason of mismatching. Therefore, it would be beneficial to aggregate such objects/nodes online for more efficient proofing. To do so, the \textsf{Sum}($\cdot$) primitive introduced by Construction 2 in Section~\ref{sec:acc}, which outputs the accumulative value of the aggregated multiset when given multiple accumulative values, can be applied.
In our running example shown in Fig.~\ref{fig:intra_index}, suppose that $o_2$ and $o_4$ share the same reason for mismatching a query condition $(\text{``Benz''})$. Then, the SP can return $\pi = \textsf{ProveDisjoint}(W_2 + W_4, \text{\{``Benz''\}}, pk)$ and $\textit{AttDigest}_{2,4} = \textsf{Sum}(\textsf{acc}(W_2), \textsf{acc}(W_4))$. And the user can apply $\textsf{VerifyDisjoint}(\textit{AttDigest}_{2,4}, \textsf{acc}(\text{\{``Benz''\}}),$ $\pi, pk)$ to prove that these two objects mismatch in a batch.

\section{Verifiable Subscription Queries}\label{sec:query_index}

A subscription query is registered by the query user and continuously processed until it is deregistered. Upon seeing a newly confirmed block, the SP will need to publish the results to registered users, together with VOs. In this section, we first propose a query index to efficiently handle a large number of subscription queries (Section~\ref{sec:q_index}). After that, we develop a lazy authentication optimization that delays mismatch proofs to reduce the query verification costs (Section~\ref{sec:lazy_opt}).

\subsection{Query Index for Scalable Processing} \label{sec:q_index}

As discussed earlier, the majority of the query processing overhead comes from generating the proofs for mismatching objects at the SP\@. Fortunately, a mismatching object could have the same reason of mismatching for different subscription queries. Thus, a mismatch proof can be shared by such queries. Inspired by~\cite{Chen:SIGMOD13}, we propose to build an inverted prefix tree, called \emph{IP-Tree}, over subscription queries. It is essentially a prefix tree with reference to inverted files for both the numerical range condition and also the Boolean set~condition.

\textbf{Prefix Tree Component}. To index the numerical ranges of all subscription queries, the IP-Tree is built on the basis of a grid tree such that each tree node is represented by a CNF Boolean function (see Section~\ref{sec:extendrange}). For example, the grid node $N_1$ in Fig.~\ref{fig:invert_index}, corresponding to the upper-left cell $([0, 2], [1, 3])$, is denoted by $\{0$$*_1 \land 1*_2\}$. The root node of the prefix tree covers the entire range space of all subscription~queries.

\textbf{Inverted File Component}. Each node of the IP-Tree is associated with an inverted file that is constructed based on the subscription queries indexed under the node. There are two subcomponents for each inverted file:
\begin{itemize}
\item \textbf{Range Condition Inverted File (RCIF)}. Each entry in the $RCIF$ has two attributes: query $q_i$ and its cover type (i.e., full or partial). All the queries in the $RCIF$ intersect the numerical space $\mathcal{S}$ of the node. The cover type indicates whether  $q_i$ fully covers or partially covers $\mathcal{S}$. The $RCIF$ is used to check the mismatch of the numerical range condition.
\item \textbf{Boolean Condition Inverted File (BCIF)}. The $BCIF$  records only the queries that fully cover the node's space. Each entry in the $BCIF$ consists of two attributes: query condition set $\Upsilon$ and corresponding queries. The $BCIF$ is used to check the mismatch of the Boolean set~condition.
\end{itemize}
\begin{figure}[t]
    \centering
    \resizebox{\linewidth}{!}{\begin{tikzpicture}
    \node[matrix](grid){
        \draw [step=2cm] (0,0) grid (4,4);
        \draw (0,3) -- (2, 3);
        \draw (1, 4) -- (1, 2);
        \draw[line width=1.5pt,dashed,red]
        (0.05,2.05) rectangle (1.95,3.95) node[anchor=north east] {$q_1$};
        \draw[purple]
        (-0.05,-0.05) rectangle (2.05,4.05) node[anchor=south east] {$q_2$};
        \draw[Green,line width=1.2pt]
        (0.1,2.1) rectangle (0.95,2.95) node[anchor=south east] {$q_3$};
        \draw[line width=1.2pt, blue]
        (2.05,0.05) rectangle (3.95,3.95) node[anchor=north east] {$q_4$};

        \node[purple,circle,fill,inner sep=1pt] (object) at (0.5, 2.5) {};
        \node [left=0] at (1.9,2.2) {$N_{1}$};
        \node [left=0] at (3.9,2.2) {$N_{2}$};
        \node [left=0] at (1.9,0.3) {$N_{3}$};
        \node [left=0] at (3.9,0.3) {$N_{4}$};
        \draw [-latex] (-0.5,-0.2) -- (4.5,-0.2) node[right]{x};
        \draw [-latex] (-0.2,-0.5) -- (-0.2,4.5) node[above]{y};
        \foreach \x in {0,1,2,3,4} {%
                \draw ($(\x,-0.2) + (0,-2pt)$) -- ($(\x,-0.2) + (0,2pt)$);
            }

        \foreach \y in {0,1,2,3,4} {%
                \draw ($(-0.2,\y) + (-2pt,0)$) -- ($(-0.2,\y) + (2pt,0)$);
            }
        \node[] at(0.5,-0.5){$0$};
        \node[] at(1.5,-0.5){$1$};
        \node[] at(2.5,-0.5){$2$};
        \node[] at(3.5,-0.5){$3$};
        \node[] at(-0.5,0.5){$0$};
        \node[] at(-0.5,1.5){$1$};
        \node[] at(-0.5,2.5){$2$};
        \node[] at(-0.5,3.5){$3$};
        \\
    };

    \node[matrix,right=0.1cm of grid](tree){
        \path[every node/.style={draw,circle}]
        node (n0) {$N_0$}
        child { node (n1) {$N_1$}
                child { node (n5) {$N_5$} }
                child { node (n6) {$N_6$} }
                child { node (n7) {$N_7$} }
                child { node (n8) {$N_8$} }
            }
        child {node (n2) {$N_2$}}
        child { node (n3) {$N_3$} }
        child { node (n4) {$N_4$} };
        \\
    };

    \node[below=0.2cm of grid] (oi) {
        $\begin{aligned}
                o_i & = \langle t_i, (0, 2), \{ \text{``Van''}, \text{``Benz''} \} \rangle         \\
                    & = \langle t_i, \{ {00}_1, {10}_2, \text{``Van''}, \text{``Benz''} \} \rangle
            \end{aligned}$
    };

    \node[left=-0.45cm of oi.north west] (oi_anchor1){};
    \node[below=0.01cm of oi_anchor1] (oi_anchor2){};
    \draw[purple,-latex,dashed] (oi_anchor2.north west) -- (object);

    \node[below=0.2cm of oi] (db) {
        \begin{tabu}{|c|c|l|}
            \hline
            \rowfont{\bfseries}
            Query & Range           &
            Boolean Condition                                                          \\
            \hline
            $q_1$ & $[(0,2),(1,3)]$ & $\{\text{``Van''} \land \text{``Benz''}\}$   \\
            \hline
            $q_2$ & $[(0,0),(1,3)]$ & $\{\text{``Van''} \land \text{``BMW''}\}$    \\
            \hline
            $q_3$ & $[(0,2),(0,2)]$ & $\{\text{``Sedan''} \land \text{``Audi''}\}$ \\
            \hline
            $q_4$ & $[(2,0),(3,3)]$ & $\{\text{``Sedan''} \land \text{``Benz''}\}$ \\
            \hline
        \end{tabu}
    };

    \node [anchor=north, below=0.6cm] at (tree.south) (text){$\textsf{Grid Cell}: [(0,2),(1,3)]\rightarrow \{{0*}_1 \land {1*}_2\}$};

    \node[anchor=north west] at (text.south west) (text1) {\textsf{RCIF:}};

    \node[anchor=north west] at (text1.north east) (t1) {
        \begin{tabular}{|c|c|}
            \hline
            \textbf{Query} & \textbf{Cover Type} \\
            \hline
            $q_1$          & full                \\
            $q_2$          & full                \\
            $q_3$          & partial             \\

            \hline
        \end{tabular}
    };

    \node[anchor=north west] at (t1.south west) (t2) {
        \begin{tabular}{|c|l|}
            \hline
            \textbf{Query Condition Set $\Upsilon$ } & \textbf{Queries} \\
            \hline
            $\{\text{``Van''}\}$                     & $q_1,q_2$        \\
            $\{\text{``Benz''}\}$                    & $q_1$            \\
            $\{\text{``BMW''}\}$                     & $q_2$            \\

            \hline
        \end{tabular}
    };

    \node[anchor=north east] at (t2.north west) (text2) {\textsf{BCIF:}};

    \node[draw=black, thick, rounded corners, fit=(text)(t1)(t2)] (tables) {};

    \node[draw,black,dashed,circle,rounded corners,fit=(n1)] (n1-box) {};
    \draw[-latex,black,dashed] (n1-box.west) to [out=180,in=45] ($(n5) + (-0.2cm, 0.9cm)$) to [out=-135,in=150] (tables.north west);
\end{tikzpicture}}
    \caption{Inverted Prefix Tree}\label{fig:invert_index}
\end{figure}
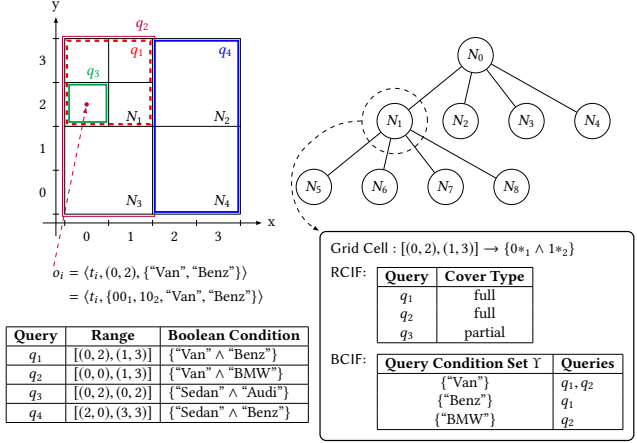
We use Fig.~\ref{fig:invert_index} as an example to illustrate how to construct the IP-Tree. It is built in a top-down fashion by the SP. We first create the root node and add all queries to its $RCIF$ as partial-cover queries. We then split the root node and create four equally-spaced child nodes. For each child node, if a query fully or partially covers the node's space, it will be added to the node's $RCIF$. Also, the equivalence sets of a full-cover query will be added to the node's $BCIF$.
Take $N_1$ as an example. While queries $q_{1}$ and $q_{2}$ fully cover this node, query $q_3$ only partially covers it. Thus, the $RCIF$ contains three intersection queries $q_1$, $q_2$, and $q_3$. The cover types of $q_1$ and $q_2$ are \textit{full} and for $q_3$ the type is \textit{partial}. As for the $BCIF$ of $N_1$, $q_1$ and $q_2$ share the equivalence set $\{\text{``Van''}\}$, and the sets $\{\text{``Benz''}\}$ and $\{\text{``BMW''}\}$ correspond to queries $q_1$ and $q_2$, respectively.
Next, since $q_3$ only partially covers $N_1$, we further split $N_1$ into four sub-cells. As $q_3$ fully covers $N_7$, it is added to the $RCIF$ and $BCIF$ of $N_7$. The algorithm terminates when no partial query is found in any leaf node. When a query is registered or deregistered, we update the IP-Tree's nodes corresponding to the numerical range of the query. We may also split or merge the tree nodes if necessary. Note that, to prevent the tree from becoming too deep, we switch back to the case without the IP-Tree when the tree depth reaches some pre-defined threshold.

With the IP-Tree index, the subscription queries can be processed as a tree traversal. We first use an example of single object to illustrate the basic idea. Upon arrival of a new object $o$, the IP-Tree is traversed along the path from the root to the leaf node that covers $o$. For any node $n_q$ on the traversal path, the associated queries can be found from $n_q$'s $RCIF$. These queries can be classified into three categories: (1) a full-cover query whose equivalence set(s) in $n_q$'s $BCIF$ match $o$ (thus $o$ is added as a result of this query); (2) a full-cover query whose equivalence set(s) in $n_q$'s $BCIF$ mismatch $o$ (thus \textsf{ProveDisjoint}($\cdot$) is invoked and a disjoint proof is generated for this query); (3) a partial-cover query (no further action is needed). In addition, we identify the queries that appear in $n_q$'s parent's $RCIF$ but not in $n_q$'s. Those queries mismatch the numerical range condition for $o$ and thus also a disjoint proof is generated for them. Next, $n_q$'s child node will be processed and this process continues until we reach a leaf node or all queries have been classified as matching or mismatching. Consider a new object $o_i = \langle t_i, (0,2),   \{\text{``Van''},\text{``Benz''}\} \rangle = \langle t_i, \{00_1, 10_2, \text{``Van''},\text{``Benz''}\} \rangle$ shown in Fig.~\ref{fig:invert_index}. At $N_1$, $q_1$ is classified as a matching query, $q_2$ and $q_4$ mismatch because of the Boolean set condition and the numerical range condition, respectively, whereas $q_3$ is not confirmed as mismatching until we check $N_1$'s child node $N_7$.

This idea can be easily extended to a new block of objects that are indexed by an intra-block index. We start from the root of the intra-block index. For any index node $n_b$, we treat it as a \textit{super object} and apply the above query processing procedure. The only difference is that if a full-cover query is classified as matching, we cannot immediately return the current node $n_b$ as a query result, but to further recursively check its child nodes until reaching the leaf nodes. In the interest of space, the pseudo codes of the detailed algorithms are given in Appendix~\ref{sec:ip_algorithms}.

\begin{algorithm}[t]
    \caption{\small Subscription Query w. Lazy Authentication (by the SP)}\label{alg:subscription-interindex}
    \small
    \SetKw{to}{to}
    \SetKw{go}{go}
    \SetKw{from}{from}
    \SetKw{is}{is}
    \SetKw{not}{not}
    \SetKw{and}{and}
    \SetKwFunction{Subscribe}{SubscribeInterIndexQuery}
    \Fn{\Subscribe{$block,q,\textrm{VO},s$}}{%
        \KwIn{Block $block$, Query condition $q = \langle \Upsilon \rangle$, Partial VO, Stack $s$ }
        $W_r\gets block.root.W_{r}$\;
        \eIf{$W_r$ matches $q$}{
            $\langle R, \textrm{VO}_b \rangle\gets$\IntraIndexQuery{$block.root,q$}\;
            $\textrm{VO}\gets \textrm{VO}+\textrm{VO}_b$;
            Send $\langle R, \textrm{VO} \rangle$ to user\;
            Empty partial $\textrm{VO}$ and $s$\;
        }{
            $\textrm{VO}_b\gets$\IntraIndexQuery{$block.root,q$}\;
            \eIf{$W_r$ has the same mismatch attributes with $s$}{
                Find the maxiumn skip $L_{i}$ s.t. it covers $m$ elements on top of $s$\;%
                \uIf{$L_{i}$ is found}{
                $\langle {block}_i, \textrm{JumpDistance}_i\rangle\gets{}s[i],\forall i \in \{m\}$\;
                Pop $m$ elements of $s$\;
                Rewind partial $\textrm{VO}$ to ${block}_m$\;
                Add $\textrm{VO}_b$, $\textit{AttDigest}_{L_{i}}$ and other hashes to VO\;
                $s.push(\langle block, \sum_i^m \textrm{JumpDistance}_i \rangle)$\;
                }
                \lElse{$s.push(\langle block, 1 \rangle)$}
            }{
                Empty $s$;
                $s.push(\langle block, 1 \rangle)$\;
            }
            }
        }
\end{algorithm}

\subsection{Lazy Authentication} \label{sec:lazy_opt}
Observing that in the previous section, the results and proofs are immediately published to registered users while a new block is confirmed. In particular, even if there is no matching result for a query, the mismatch proofs are still computed and sent. This approach is good for real-time applications. For applications that do not have such real-time requirements, we propose a \emph{lazy authentication} optimization, in which the SP returns the result only when there is a matching object (or the time since the last result has passed a threshold).

In this approach, the VO should prove that the current object is a match and all other objects since the last result mismatch the query. To achieve this, we may simply wait for the matching result and invoke a time-window query to compute the mismatch proofs on the fly. However, this method can only generate the mismatch proofs for each query separately and is incapable to take advantage of the proofs shared by different subscription queries. Moreover, this method leaves the burden of proofing all to the time when there is a matching result. To address these issues, we propose a new method that makes use of the inter-block index to incrementally generating mismatch proofs.

Using the inter-block index to answer a subscription query is completely different from doing that to a time-window query. The reason is that we can back traverse the blockchain and use the skip list to aggregate proofs in a time-window query. However, we cannot do so for a subscription query because new blocks are not yet available and we do not know whether or not future objects will share the same mismatch conditions. As such, we introduce a \textit{stack} to facilitate tracking the arrived blocks that share the same mismatch conditions. The basic idea is to use the skip list to find the maximum skip distance $L_i$ such that it covers $m$ elements on top of the stack. The \textit{AttDigest}s of these blocks are replaced with $\textit{AttDigest}_{L_{i}}$. Thanks to Construction 2 in Section~\ref{sec:acc}, the disjoint proofs can be aggregated online by invoking \textsf{ProofSum}($\cdot$). For example, we have two mismatching $block_i$ and $block_{i-1}$ in the stack and there is a skip with distance 2. Then, the SP can replace their proofs by an aggregate proof computed from \textsf{ProofSum}$(\pi_i, \pi_{i-1})$. In this way, the SP does not need to compute the set disjoint proofs from scratch when a matching result is found.
The detailed procedure is described in Algorithm~\ref{alg:subscription-interindex}.

\section{Security Analysis}\label{sec:security_analysis}

This section performs a security analysis on the multiset accumulators and query authentication algorithms.

\subsection{Analysis on Multiset Accumulators}\label{sec:security_analysis_acc}

We first present a formal definition of the security notion for multiset accumulators and set disjoint proofs.

\begin{definition}[Unforgeability~\cite{papamanthou2011optimal}]\label{def:acc-unf}
    We say a multiset accumulator is unforgeable if the success probability of any polynomial-time adversary is negligible in the following experiment:
    \begin{itemize}
        \item Run $(sk, pk) \gets \textsf{KeyGen}(1^\lambda)$ and give the public key $pk$ to the adversary;
        \item The adversary outputs two multisets $X_1$ and $X_2$, along with a set disjoint proof $\pi$.
    \end{itemize}
    We say the adversary succeeds if $\textsf{VerifyDisjoint}(\textsf{acc}(X_1)$,\linebreak $\textsf{acc}(X_2)$, $\pi$, $pk)$ outputs 1 and $X_1 \cap X_2 \neq \emptyset$.
\end{definition}

This property ensures that the chance for a malicious SP to forge a set disjoint proof is negligible, which serves as a foundation for the security of our proposed query authentication algorithms. We now show that our constructions of the accumulator indeed satisfy the desired security requirement.

\begin{restatable}{theorem}{accsecuritytheorem}\label{THE:ACC-SEC}
    The constructions of the multiset accumulator presented in Section~\ref{sec:acc} satisfy the security property of the unforgeability as defined in Definition~\ref{def:acc-unf}.
\end{restatable}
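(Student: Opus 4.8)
The plan is to prove each of the two constructions separately by reducing a successful forgery to breaking the underlying hardness assumption ($q$-SDH for Construction 1, $q$-DHE for Construction 2). In both cases I would argue by contradiction: suppose a polynomial-time adversary $\textsf{Adv}$ outputs multisets $X_1, X_2$ with $X_1 \cap X_2 \neq \emptyset$ together with a proof $\pi$ that makes $\textsf{VerifyDisjoint}$ accept with non-negligible probability; I then construct a reduction $\mathcal{B}$ that uses $\textsf{Adv}$ to solve the assumed-hard problem. The key observation is that $X_1 \cap X_2 \neq \emptyset$ means there is a common element, which forces a nontrivial algebraic relation among the polynomials $P(X_1), P(X_2)$ (resp.\ $A(X_1), B(X_2)$), and the verification equation that $\pi$ satisfies then contradicts that relation unless $\mathcal{B}$ can extract a forbidden group element.

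For Construction 1: if $x^\ast \in X_1 \cap X_2$, then $(x^\ast + s)$ divides both $P(X_1)$ and $P(X_2)$, so it divides $P(X_1)Q_1 + P(X_2)Q_2$ for any $Q_1, Q_2$; in particular the gcd of $P(X_1)$ and $P(X_2)$ is a nonconstant polynomial, so no $Q_1, Q_2$ with $P(X_1)Q_1 + P(X_2)Q_2 = 1$ exist. The accepting proof $\pi = (F_1^\ast, F_2^\ast)$ satisfies $e(\textsf{acc}(X_1), F_1^\ast)\cdot e(\textsf{acc}(X_2), F_2^\ast) = e(g,g)$. Writing $F_1^\ast = g^{q_1(s)}$, $F_2^\ast = g^{q_2(s)}$ formally (or treating the exponents as unknowns evaluated at the secret $s$), bilinearity gives $e(g,g)^{P(X_1)(s)q_1(s) + P(X_2)(s)q_2(s)} = e(g,g)$, i.e.\ the exponent equals $1$ at the point $s$. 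Since $(x^\ast + X)$ divides $P(X_1)(X)$ and $P(X_2)(X)$ as polynomials, the reduction rearranges this to obtain a value of the form $e(g,g)^{1/(x^\ast + s)}$ (after factoring out the common $(x^\ast+s)$ and using the public key powers $g, g^s, \dots, g^{s^q}$ to evaluate the remaining polynomial pieces in the exponent). That is exactly a $q$-SDH solution with $c = x^\ast$, contradicting the assumption. The reduction is efficient because all exponent polynomials have degree at most $q$ and can be evaluated "in the exponent" from $pk$.

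For Construction 2: if $x^\ast \in X_1 \cap X_2$, then $A(X_1)$ contains the monomial $s^{x^\ast}$ and $B(X_2)$ contains $s^{q - x^\ast}$, so the product $A(X_1)B(X_2)$ contains a nonzero $s^q$ term — i.e.\ $Cs^q \in A(X_1)B(X_2)$ with $C \neq 0$. An accepting proof $\pi$ satisfies $e(d_A(X_1), d_B(X_2)) = e(\pi, g)$, which forces (by bilinearity and nondegeneracy) $\pi = g^{A(X_1)(s)B(X_2)(s)}$. The reduction is given the $q$-DHE challenge $pk = (g, g^s, \dots, g^{s^{q-1}}, g^{s^{q+1}}, \dots, g^{s^{2q-2}})$ — note $g^{s^q}$ is precisely the missing term. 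Since every monomial in $A(X_1)B(X_2)$ other than the $s^q$ term has exponent in $\{0, \dots, 2q-2\}\setminus\{q\}$, the reduction can compute $g^{A(X_1)(s)B(X_2)(s) - Cs^q}$ from the available powers, then divide $\pi$ by it and take a $C^{-1}$ power to recover $g^{s^q}$, contradicting $q$-DHE. One subtlety to handle carefully is that an element may appear with multiplicity in the multiset, so $A(X_1)B(X_2)$ could a priori have the $s^q$ coefficient cancel — I would argue that because all $x_i$ are distinct *within* a given set after the hash encoding maps values to a range (or, more robustly, because the coefficients are positive integers and cannot cancel), the $s^q$ coefficient $C$ is genuinely nonzero whenever $X_1 \cap X_2 \neq \emptyset$.

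The main obstacle I expect is the bookkeeping in justifying that an accepting proof must have the "honest" algebraic form — i.e.\ that from the verification equation alone one may treat the adversary's group elements $F_i^\ast$ (resp.\ $\pi$) as $g$ raised to a specific polynomial in $s$ and manipulate exponents freely. This is routine when one works in the generic group model or invokes the standard "knowledge of exponent"-style argument, but to keep the reduction in the plain model under $q$-SDH/$q$-DHE one must instead argue directly: the verification equation is an equality in $\mathbb{H}$, so taking discrete logs base $e(g,g)$ yields a scalar equation $P(X_1)(s)\cdot f_1 + P(X_2)(s)\cdot f_2 = 1$ where $f_1, f_2 \in \mathbb{Z}_p$ are the (unknown) discrete logs of $F_1^\ast, F_2^\ast$; the reduction never needs $f_1, f_2$ explicitly, only the group elements $F_i^\ast$ themselves, and it recombines those group elements with the public powers to build the target element. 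Making this recombination precise — showing that the common factor $(x^\ast + s)$ can be "divided out in the exponent" using only $pk$ — is the technical heart, and it is where I would spend the bulk of the writeup. The rest (negligibility, polynomial running time, the reduction to the stated assumptions verbatim) follows from the definitions in Section~\ref{sec:prelimiaries}, and one may also simply cite \cite{papamanthou2011optimal} and \cite{zhangexpressive} for the respective constructions and adapt their arguments to the multiset setting.
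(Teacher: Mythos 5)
Your proposal is correct and follows essentially the same route as the paper: for Construction~1 the paper simply cites~\cite{papamanthou2011optimal} (as you also suggest), and for Construction~2 the paper gives exactly your contradiction argument --- the verification equation forces $\pi = g^{A(X_1)B(X_2)}$, a common element makes the $s^q$ coefficient a non-zero constant $C$, and stripping off $g^{Q(s)}$ and taking a $C^{-1}$ power recovers $g^{s^q}$, violating $q$-DHE. Your added care about why $\pi$ must have the honest form (injectivity of $e(\cdot,g)$) and why $C$ cannot cancel are sensible elaborations of steps the paper treats in one line, not a different approach.
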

\vspace{-3ex}
\begin{proof}
    See Appendix~\ref{prof:acc-security} for a detailed proof.
\end{proof}

\vspace{-2ex}
\subsection{Analysis on Query Authentication}\label{sec:security_analysis_alg}

The formal definition of the unforgeability for our query authentication algorithms is given below:

\begin{definition}[Unforgeability]\label{def:alg-unf}
    We say our proposed query authentication algorithms are unforgeable if the success probability of any polynomial-time adversary is negligible in the following experiment:
    \begin{itemize}
        \item Run the ADS generation and give all objects $\{o_i\}$ to the adversary;
        \item The adversary outputs a query $q$ (either time-window or subscription query), a result $R$, and a VO\@;
    \end{itemize}
    We say the adversary succeeds if the VO passes the result verification and one of the following results is true:
    \begin{itemize}
        \item $R$ contains an object $o^*$ such that $o^* \notin \{o_i\}$;
        \item $R$ contains an object $o^*$ such that $o^*$ does not satisfy the query $q$;
        \item There exists an object $o_x$ in the query time window or subscription period, which is not in $R$ but satisfies $q$.
    \end{itemize}
\end{definition}

This property ensures that the chance for a malicious SP to forge an incorrect or incomplete result is negligible. We can show that our proposed query authentication algorithms indeed satisfy the desired security requirement.

\begin{restatable}{theorem}{algsecuritytheorem}\label{THE:ALG-SEC}
    Our proposed query authentication algorithms satisfy the security property of the unforgeability as defined in Definition~\ref{def:alg-unf}.
\end{restatable}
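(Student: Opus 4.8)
The plan is to reduce the unforgeability of our query authentication algorithms (Definition~\ref{def:alg-unf}) to two facts already in hand: the collision resistance of $\textsf{hash}(\cdot)$ and the unforgeability of the multiset accumulator (Theorem~\ref{THE:ACC-SEC}). Suppose, for contradiction, that a polynomial-time adversary $\mathcal{A}$, given all objects $\{o_i\}$ and the honestly generated ADS, outputs $(q,R,\textrm{VO})$ that passes result verification and triggers one of the three winning conditions with non-negligible probability $\varepsilon$; by a union bound some fixed condition is triggered with probability at least $\varepsilon/3$. I would analyze each condition separately and in each case build a reduction $\mathcal{B}$ that breaks one of the two assumptions. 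Here it helps that $\textsf{Setup}(\cdot)$ needs only $pk$, so when targeting the accumulator $\mathcal{B}$ can embed its challenge public key into the ADS generation and still simulate $\mathcal{A}$'s view perfectly.

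The two ``soundness'' conditions are the easy ones. If $R$ contains an $o^{*}$ that does not satisfy $q$: to pass verification, the user (Algorithm~\ref{alg:single-obj-query}) both re-tests $o^{*}$ against $q$ and reconstructs the \textit{MerkleRoot} of the block $o^{*}$ is claimed to occupy and checks it against the header held by the light node. Either $o^{*}$ is the genuine leaf there, and the explicit match test rejects --- contradiction --- or it is not, and the Merkle path from $o^{*}$ to the authenticated root exhibits two distinct inputs with equal $\textsf{hash}$, which $\mathcal{B}$ outputs. The fabrication condition $o^{*}\notin\{o_i\}$ is identical: inserting such an $o^{*}$ into an accepting VO forces a hash collision somewhere on its authentication path.

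The substantive case is completeness: some $o_x$ in the query window (or subscription period) satisfies $q$ yet $o_x\notin R$. Writing $\Upsilon$ as a list of sets $\{\Upsilon_1,\dots,\Upsilon_\ell\}$, ``$o_x$ satisfies $q$'' means $W_{o_x}\cap\Upsilon_j\neq\emptyset$ for every $j$. I would trace the authentication path the VO follows through the intra-block Merkle tree and, across the time window, through the inter-block skip list. On the branch leading to $o_x$ the VO must at some node $n$ (an intra-block tree node, or a skip entry $L_k$) assert a mismatch: it supplies a clause $\Upsilon_j$, the node's accumulator value $\textit{AttDigest}_n$, and a proof $\pi$ accepted by $\textsf{VerifyDisjoint}$ --- otherwise the traversal reaches the leaf $o_x$ and either returns it (contradiction) or again asserts a leaf-level mismatch. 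Now a dichotomy: either the value played at $n$ differs from the honest $\textsf{acc}(W_n)$, where $W_n$ is the union (intra-block) or the multiset sum $\sum_{j} W_{j}$ (a skip) of the underlying objects' multisets --- but $\textit{AttDigest}_n$ is chained into ${hash}_n$ and thence into the authenticated \textit{MerkleRoot} or \textit{SkipListRoot}, so this again yields a hash collision; or the honest value is used, and then every element of $W_{o_x}$ lies in $W_n$, so $W_n\cap\Upsilon_j\neq\emptyset$, and $(W_n,\Upsilon_j,\pi)$ is precisely a forged set-disjoint proof in the sense of Definition~\ref{def:acc-unf}, contradicting Theorem~\ref{THE:ACC-SEC}. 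The same dichotomy applies verbatim to the online-batch method of Section~\ref{sec:online_batch} and the lazy authentication of Section~\ref{sec:lazy_opt}, where the asserted multiset is a $\textsf{Sum}$ of several $\textit{AttDigest}$'s and $\pi$ is a $\textsf{ProofSum}$ aggregate: the aggregate is still verified by $\textsf{VerifyDisjoint}$ against the summed multiset, so a false assertion is still an accumulator forgery. Finally, two remarks close the argument: range predicates reduce to Boolean ones because $\textsf{trans}(\cdot)$ is public and deterministic and is applied identically by the user, so a range-query forgery is a Boolean-query forgery on the transformed instance; and the IP-Tree, its RCIF/BCIF, and the subscription stack are purely SP-side acceleration structures that never appear in what the user checks, hence add no attack surface. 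I expect the main obstacle to be exactly this bookkeeping in the completeness case --- enumerating every site at which a mismatch may be asserted, identifying which multiset is claimed disjoint from which clause, tracking how that multiset's accumulator value is chained into a header value the light node already trusts, and arguing that a pruned subtree or skipped block-range genuinely contains no valid result precisely when the asserted disjointness holds.
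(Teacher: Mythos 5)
Your proposal is correct and follows essentially the same route as the paper's proof: a case analysis over the three winning conditions, reducing the two soundness violations to hash collisions on the authenticated Merkle/header structure and the completeness violation to a forged set-disjoint proof contradicting Theorem~\ref{THE:ACC-SEC}. Your write-up is considerably more explicit than the paper's (the tampered-versus-honest \textit{AttDigest} dichotomy, the treatment of \textsf{Sum}/\textsf{ProofSum} aggregates, and the observation that $\textsf{trans}(\cdot)$ and the IP-Tree add no attack surface are all left implicit in the paper), but the underlying argument is the same.
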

\vspace{-3ex}
\begin{proof}
    See Appendix~\ref{prof:alg-security} for a detailed proof.
\end{proof}

\section{Performance Evaluation}\label{sec:experiment}

In this section, we evaluate the performance of the vChain framework for time-window queries and subscription queries. Three datasets are used in the experiments:
\begin{itemize}
\item  \textbf{Foursquare} (\textsf{4SQ}) \cite{yang2015participatory}: The 4SQ dataset contains 1M data records, which are the user check-in information. We pack the records within a 30s interval as a block and each object has the form of $\langle timestamp$, $[longitude$, $latitude]$, $\{\textit{check-in place's keywords}\} \rangle$. On average, each record has 2 keywords.

\item \textbf{Weather} (\textsf{WX}): The WX dataset contains 1.5M hourly weather records for 36 cities in US, Canada, and Israeli  during 2012-2017.\footnote{\url{https://www.kaggle.com/selfishgene/historical-hourly-weather-data/}} For each record, it contains seven numerical attributes (such as \textit{humidity} and \textit{temperature}) and one \textit{weather description} attribute with 2 keywords on average. The records within the same hour interval are packed as a block.

\item \textbf{Ethereum} (\textsf{ETH}): The ETH transaction dataset is extracted from the Ethereum blockchain during the period from Jan 15, 2017 to Jan 30, 2017.\footnote{\url{https://www.ethereum.org/}} It contains 90,000 blocks with 1.12M transaction records.
Each transaction is in the form of $\langle timestamp$, $amount$, $\{addresses\} \rangle$, where $amount$ is the amount of Ether transferred and $\{addresses\}$ are the addresses of senders and receivers.
Most transactions have two addresses.
\end{itemize}
Note that the time intervals of the blocks in \textsf{4SQ}, \textsf{WX}, and \textsf{ETH} are roughly 30s, 1 hour, and 15s, respectively.

The query user is set up on a commodity laptop computer with Intel Core i5 CPU and 8GB RAM, running on CentOS~7 with a single thread. The SP and the miner are set up on a x64 blade server with dual Intel Xeon 2.67GHz, X5650 CPU and 32 GB RAM, running on CentOS 7. The experiments are written in C++ and the following libraries are used: MCL for bilinear pairing computation,\footnote{MCL: \url{https://github.com/herumi/mcl/}} Flint for modular arithmetic operations, Crypto++ for 160-bit SHA-1 hash operations, and OpenMP for parallel computation. Also, the SP runs with 24 hyperthreads to accelerate the query processing.

To evaluate the performance of verifiable queries in vChain, we mainly use three metrics:
\begin{enumerate*}[label=(\roman*)]
    \item query processing cost in terms of SP CPU time,
    \item result verification cost in terms of user CPU time, and
    \item size of the VO transmitted from the SP to the user.
\end{enumerate*}
For each experiment, we randomly generate 20 queries and report the average results. By default, we set the selectivity of the numerical range to 10\% (for \textsf{4SQ} and \textsf{WX}) and 50\% (for \textsf{ETH}) and employ a disjunctive Boolean function with a size of 3 (for \textsf{4SQ} and \textsf{WX}) and 9 (for \textsf{ETH}). For \textsf{WX}, two attributes are involved in each range~predicate.

\vspace{-2ex}
\subsection{Setup Cost}

\FloatBarrier

\begin{table}[t]
    \centering
    \caption{Miner's Setup Cost}\label{tab:setup}
    \begin{adjustbox}{max width=.78\linewidth}
    \begin{threeparttable}[b]
    \small
    \begin{tabular}{cccccccc}
        \toprule
        \multirow{2}{*}{\textbf{Dataset}} & \multirow{2}{*}{\textbf{Acc}} &
        \multicolumn{2}{c}{nil} &
        \multicolumn{2}{c}{intra} &
        \multicolumn{2}{c}{both} \\
        \cmidrule{3-8}
                                   & & T & S & T & S & T & S \\
        \midrule
        \textsf{4SQ} & acc1 & 0.17 & 2.12 & 0.65 & 10.7 & 12.5 & 11.1 \\
        \textsf{4SQ} & acc2 & 0.06 & 2.12 & 0.26 & 10.7 & 1.16 & 11.1 \\
        \textsf{WX} & acc1 & 0.16 & 1.55 & 0.52 & 7.38 & 1.01 & 7.68 \\
        \textsf{WX} & acc2 & 0.05 & 1.55 & 0.16 & 7.38 & 0.20 & 7.68 \\
        \textsf{ETH} & acc1 & 0.01 & 0.55 & 0.07 & 2.60 & 0.87 & 2.93 \\
        \textsf{ETH} & acc2 & 0.14 & 0.55 & 0.30 & 2.60 & 0.13 & 2.93 \\
        \bottomrule
    \end{tabular}
    \begin{tablenotes}
        \item[] {\footnotesize T: ADS construction time (s/block);~~~S: ADS size (KB/block)}
     \end{tablenotes}
    \end{threeparttable}
    \end{adjustbox}
\end{table}

Table~\ref{tab:setup} reports the miner's setup cost, including the ADS construction time and the ADS size. Three methods are compared in our experiments:
\begin{enumerate*}[label=(\roman*)]
    \item \emph{nil}: no index is used;
    \item \emph{intra}: only intra-block index is used;
    \item \emph{both}: both intra- and inter-block indexes are used, in which the size of \emph{SkipList} in the inter-block index is set to 5.
\end{enumerate*}
Each method is implemented with two different accumulator constructions (labelled with \emph{acc1} and \emph{acc2}) presented in Section~\ref{sec:acc}. Thus, a total of six schemes are evaluated in each experiment. As expected, the ADS construction time of \emph{both} is generally longer than those of \emph{nil} and \emph{intra}, but still within 2s for most cases. Moreover, compared with \emph{acc1}, \emph{acc2} significantly reduces the construction time of \emph{both} because it supports online aggregation and hence can reuse the index of the previous block in constructing the inter-block index. Regarding the ADS size, it is independent of the accumulator used and ranges from 2.6KB to 11.1KB per block for different indexes and datasets.

We also measure the space required by the user for running a light node to maintain the block headers. For both \emph{nil} and \emph{intra}, the size of each block header is 800 bits, regardless of the dataset or accumulator. Due to the inter-block index, the block header size of \emph{both} is slightly increased to 960 bits.

\vspace{-1ex}
\subsection{Time-Window Query Performance}
To evaluate the performance for time-window queries, we vary the query window  from 2 to 10 hours for \textsf{4SQ} and \textsf{ETH} and from 20 to 100 hours for \textsf{WX}.
The results for the three datasets are shown in Figs.~\ref{exp-fig:timewindow4sq}--\ref{exp-fig:timewindoweth}, respectively. We make several interesting observations. First, as expected, the indexes substantially improve the performance in almost all metrics. In particular, for the \textsf{4SQ} and \textsf{ETH} datasets, the performance of using the indexes is at least 2X better than that with the same accumulator but without using any index. This is because the objects in these two datasets share less similarity and hence benefit more from using the indexes for pruning.
Second, the costs of the index-based schemes increase only sublinearly with enlarging the query window. This is particularly true in terms of the user CPU time for the index-based schemes using $acc2$, which supports batch verification of mismatches (see Section~\ref{sec:online_batch}). Third, comparing $intra$ and $both$, $both$ is always no worse than $intra$ except concerning the SP CPU time for the \textsf{4SQ} dataset.  On the one hand, this indicates the effectiveness of using the inter-block index. On the other hand, the reason of $both$ being worse than $intra$ in SP CPU time is mainly because in an inter-block index-based scheme, larger multisets are used as the input of a set disjoint proof, which increases the SP CPU time.
More insight on this is provided in Appendix~\ref{sec:skiplist_exp}, where we examine the impact of \emph{SkipList} size. The biggest improvement of $both$ over $intra$ is observed for the \textsf{ETH} dataset. The reason is as follows. Compared with \textsf{4SQ}, the similarity shared among the objects in \textsf{ETH} is lower; compared with \textsf{WX}, \textsf{ETH} has less objects contained in each block. For both cases, more performance improvement is gained from using the skip list in the inter-block index.

\begin{figure}[t]
    \centering
    \begin{subfigure}{.33\linewidth}
        \includegraphics[width=\linewidth]{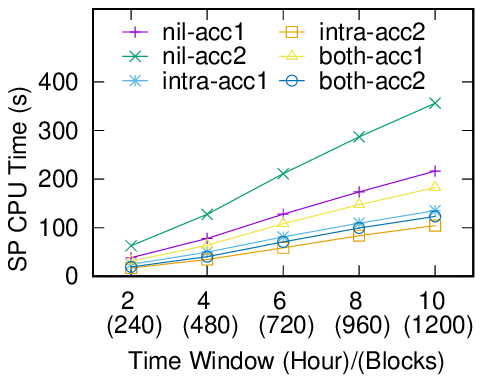}
    \end{subfigure}~%
    \begin{subfigure}{.33\linewidth}
        \includegraphics[width=\linewidth]{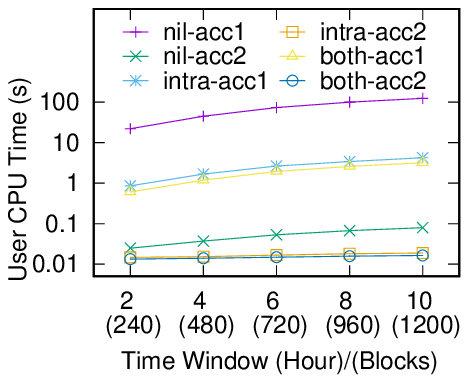}
    \end{subfigure}~%
    \begin{subfigure}{.33\linewidth}
        \includegraphics[width=\linewidth]{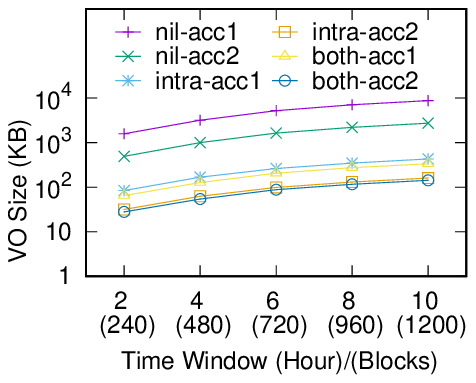}
    \end{subfigure}
    \caption{Time-Window Query Performance (\textsf{4SQ})}\label{exp-fig:timewindow4sq}
\end{figure}

\begin{figure}[t]
    \centering
    \begin{subfigure}{.33\linewidth}
        \includegraphics[width=\linewidth]{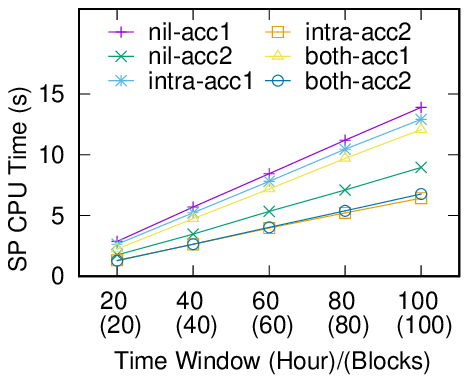}
    \end{subfigure}~%
    \begin{subfigure}{.33\linewidth}
        \includegraphics[width=\linewidth]{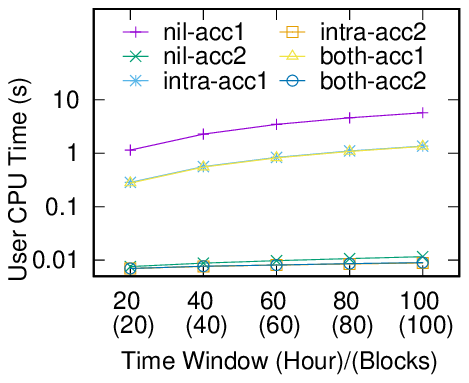}
    \end{subfigure}~%
    \begin{subfigure}{.33\linewidth}
        \includegraphics[width=\linewidth]{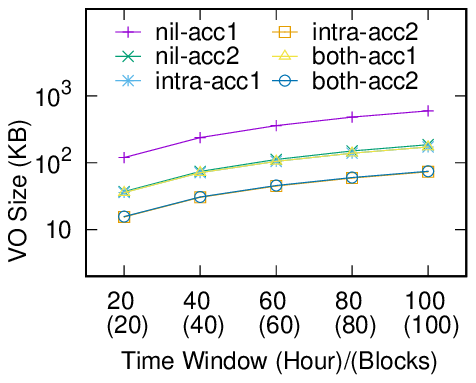}
    \end{subfigure}
    \caption{Time-Window Query Performance (\textsf{WX})}\label{exp-fig:timewindowwx}
\end{figure}

\begin{figure}[t]
    \centering
    \begin{subfigure}{.33\linewidth}
        \includegraphics[width=\linewidth]{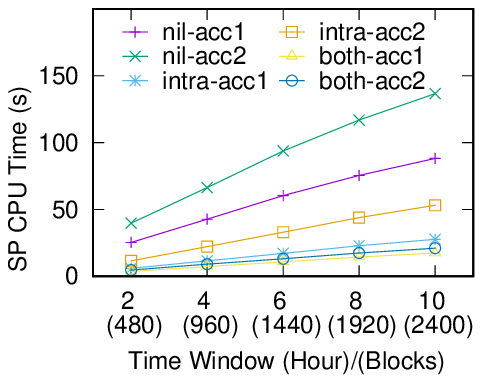}
    \end{subfigure}~%
    \begin{subfigure}{.33\linewidth}
        \includegraphics[width=\linewidth]{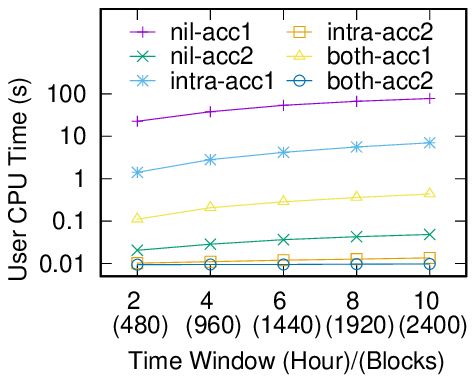}
    \end{subfigure}~%
    \begin{subfigure}{.33\linewidth}
        \includegraphics[width=\linewidth]{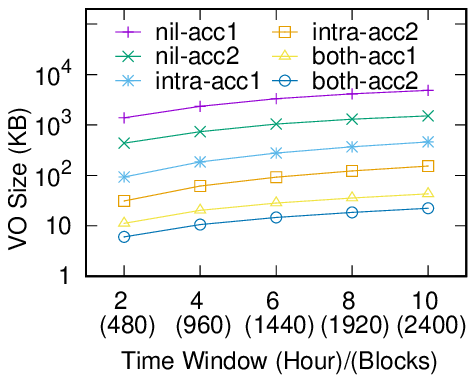}
    \end{subfigure}
    \caption{Time-Window Query Performance (\textsf{ETH})}\label{exp-fig:timewindoweth}
\end{figure}

\begin{figure}[t]
    \centering
    \begin{subfigure}{.33\linewidth}
        \includegraphics[width=\linewidth]{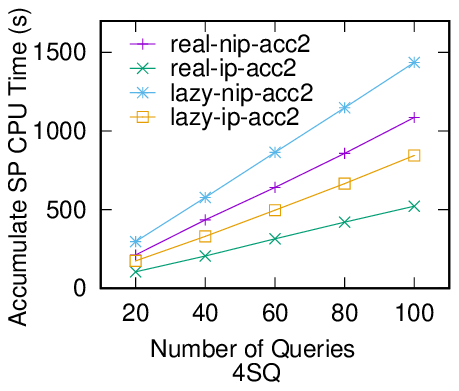}
    \end{subfigure}~%
    \begin{subfigure}{.33\linewidth}
        \includegraphics[width=\linewidth]{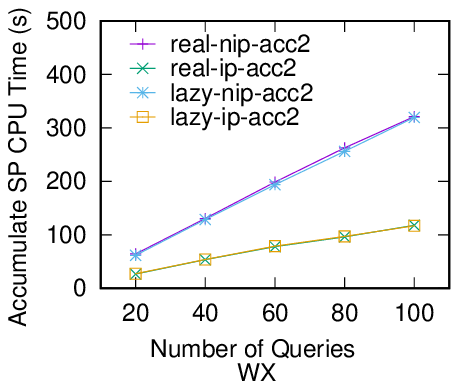}
    \end{subfigure}~%
    \begin{subfigure}{.33\linewidth}
        \includegraphics[width=\linewidth]{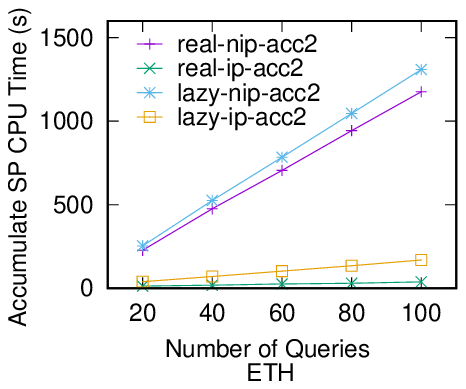}
    \end{subfigure}
    \caption{Subscription Query with IP-Tree Index}\label{exp-fig:iptree}
\end{figure}

\begin{figure}[t]
    \centering
    \begin{subfigure}{.33\linewidth}
        \includegraphics[width=\linewidth]{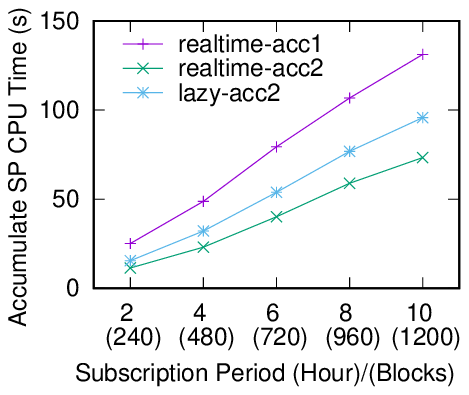}
    \end{subfigure}~%
    \begin{subfigure}{.33\linewidth}
        \includegraphics[width=\linewidth]{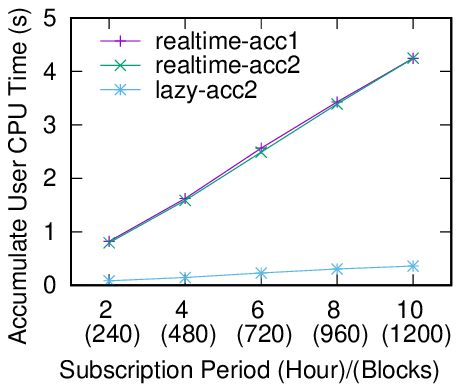}
    \end{subfigure}~%
    \begin{subfigure}{.33\linewidth}
        \includegraphics[width=\linewidth]{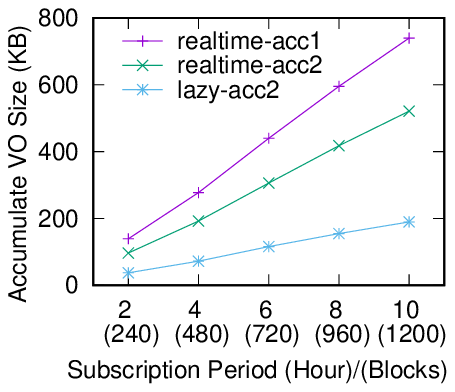}
    \end{subfigure}
    \caption{Subscription Query Performance (\textsf{4SQ})}\label{exp-fig:sub4sq}
\end{figure}

\begin{figure}[t]
    \centering
    \begin{subfigure}{.33\linewidth}
        \includegraphics[width=\linewidth]{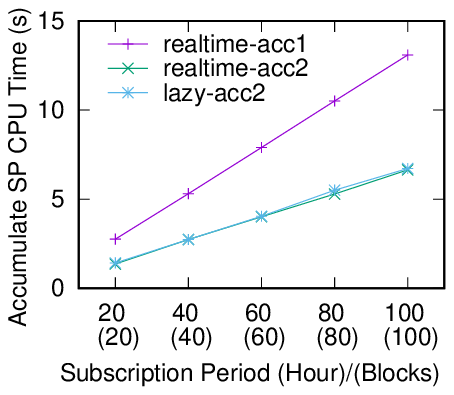}
    \end{subfigure}~%
    \begin{subfigure}{.33\linewidth}
        \includegraphics[width=\linewidth]{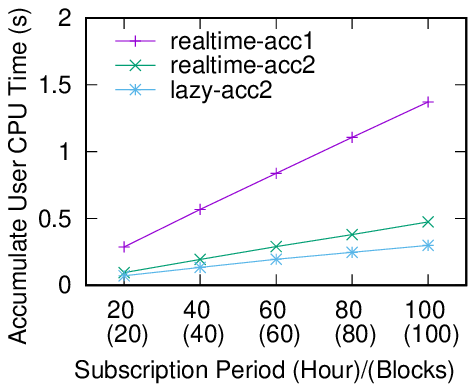}
    \end{subfigure}~%
    \begin{subfigure}{.33\linewidth}
        \includegraphics[width=\linewidth]{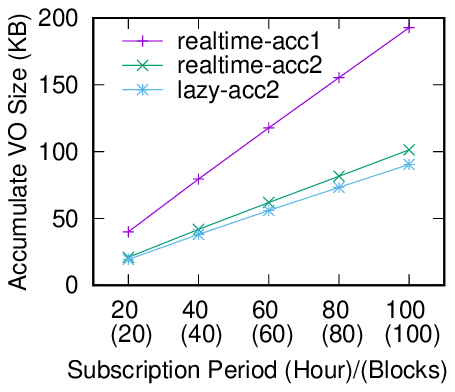}
    \end{subfigure}
    \caption{Subscription Query Performance (\textsf{WX})}\label{exp-fig:subwx}
\end{figure}

\begin{figure}[t]
    \centering
    \begin{subfigure}{.33\linewidth}
        \includegraphics[width=\linewidth]{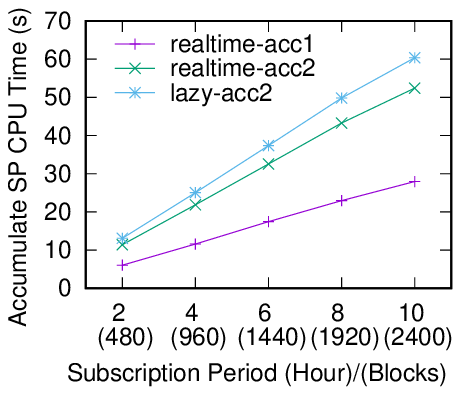}
    \end{subfigure}~%
    \begin{subfigure}{.33\linewidth}
        \includegraphics[width=\linewidth]{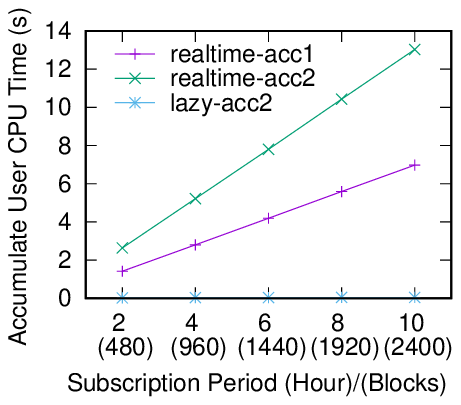}
    \end{subfigure}~%
    \begin{subfigure}{.33\linewidth}
        \includegraphics[width=\linewidth]{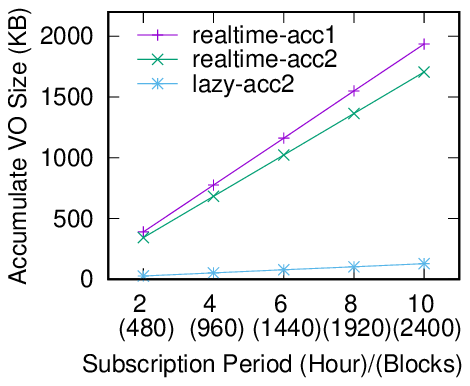}
    \end{subfigure}
    \caption{Subscription Query Performance (\textsf{ETH})}\label{exp-fig:subeth}
\end{figure}

\vspace{-2ex}
\subsection{Subscription Query Performance}
We next evaluate the performance for subscription queries. First, we examine the SP's query processing time with or without using the IP-Tree (denoted as $ip$ and $nip$) under the default setting with both intra- and inter-block indexes enabled. We randomly generate different numbers of queries. We set the default subscription period as 2 hours for \textsf{4SQ} and \textsf{ETH} and 20 hours for \textsf{WX}. As shown in Fig.~\ref{exp-fig:iptree}, the IP-Tree reduces the SP's overhead by at least 50\% in all cases tested. The performance gain in the \textsf{ETH} dataset (Fig.~\ref{exp-fig:iptree}(c)) is more substantial thanks to the sparser distribution of the data.

To compare real-time and lazy authentications, we consider two real-time schemes (with \emph{acc1} and \emph{acc2}) and one lazy scheme (with \emph{acc2} only, as \emph{acc1} does not support the aggregation of accumulative sets and proofs). We vary the subscription period from 2 hours to 10 hours for \textsf{4SQ} and \textsf{ETH} and 20 hours to 100 hours for \textsf{WX}.
 Figs.~\ref{exp-fig:sub4sq}--\ref{exp-fig:subeth} show the results of varying the subscription period.
 Clearly, the lazy scheme performs much better than the real-time schemes in terms of the user CPU time.
Furthermore, the CPU time and the VO size in the lazy scheme are increased only sub-linearly with increasing the subscription period.
This is because the lazy scheme can aggregate the proofs of mismatching objects across blocks. In contrast, the real-time schemes compute all the proofs immediately upon arrival of a new block, resulting in a worse performance. In terms of the SP CPU time, as the lazy scheme needs to sacrifice the SP's computation to aggregate mismatch proofs, its performance is generally worse than the real-time schemes when using the same accumulator.

\section{Conclusion}\label{sec:conclusion}
In this paper, we have studied, for the first time in the literature, the problem of verifiable query processing over blockchain databases. We proposed the vChain framework to ensure the integrity of Boolean range queries for lightweight users. We developed a novel accumulator-based ADS scheme that transforms numerical attributes into set-valued attributes and hence enables dynamic aggregation over arbitrary query attributes. Based on that, two data indexes, namely tree-based intra-block index and skip-list-based inter-block index, and one prefix-tree-based index for subscription queries were designed, along with a series of optimizations. While our proposed framework has been shown to be practically implementable, the robustness of the proposed techniques was substantiated by security analysis and empirical results.

This paper opens up a new direction for blockchain research. There are a number of interesting research problems that deserve further investigation, e.g., how to support more complex analytics queries; how to leverage modern hardware such as multi- and many-cores to scale performance; and how to address privacy concerns in query processing.
\clearpage

\noindent%
\textbf{Acknowledgments.} This work was supported by Research Grants Council of Hong Kong under GRF Projects 12201018, 12200817, 12244916, and CRF Project C1008-16G.

\printbibliography%

\appendices%

\section{Pseudo Codes of the IP-Tree Algorithms} \label{sec:ip_algorithms}

Algorithms~\ref{alg:invertedprefix-construct} and \ref{alg:subscription-query} respectively show the construction and query processing algorithms of the IP-tree index introduced in Section~\ref{sec:query_index}.

\begin{algorithm}[t]
    \caption{IP-Tree Construction (by the SP)}\label{alg:invertedprefix-construct}
    \small
    \SetKw{is}{is}
    \SetKw{not}{not}
    \SetKw{in}{in}
    \SetKw{and}{and}
    \SetKwFunction{BuildIPTree}{BuildIPTree}
    \Fn{\BuildIPTree{$\{query\}$}}{%
        \KwIn{All subscription queries $\{query\}$}
            $root.grid\gets FullRange$\;
            Create an empty queue $queue$\;
            $queue.enqueue(\langle root,\{query\} \rangle)$\;
            \While{$queue$ \is \not empty}{
                $\langle node, \{query_n\}\rangle \gets queue.dequeue()$\;
                \For{$q_i$ \in $\{query_n\}$}{
                    \uIf{$q_i$ full covers $node.grid$}{Add $\langle q_i,\textsf{full}\rangle$ to $node.RCIF$\;
                    }
                    \lElse{Add $\langle q_i,\textsf{partial}\rangle$ to $node.RCIF$}
                }

                Build $node.BCIF$ from fully cover queries\;
                \For{$sub\_g$~\in~\textsf{Split}($node.grid$)}{
                    $sub\_n.grid \gets sub\_g$\;
                    $sub\_q \gets \{q~|~node.RCIF[q]=\textsf{partial} \land intersect(q,sub\_g) \}$\;
                    $queue.enqueue(\langle sub\_n,\{sub\_q\} \rangle)$\;
                    $node.children.add(sub\_n)$\;
                }
            }
        }
\end{algorithm}

\begin{algorithm}[t]
    \caption{\small Subscription Query w. Intra-Index (by the SP)}\label{alg:subscription-query}
    \small
    \SetKw{is}{is}
    \SetKw{forin}{in}
    \SetKw{not}{not}
    \SetKw{and}{and}
    \SetKwFunction{QueryIntraNode}{QueryIntraNode}
    \SetKwFunction{SubscriptionIPTree}{SubscriptionIPTree}
    \Fn{\SubscriptionIPTree{$r_\textrm{IP},IntraRoot$}}{
        \KwIn{IP-Tree root $r_\textrm{IP}$, Intra-Index root $IntraRoot$}
        Create an empty queue $queue_1$\;
        $Q\gets \{\}$\;
        $queue_1.enqueue(\langle IntraRoot, Q\rangle)$\;
        \While{$queue_1$ \is not empty}
        {
            $\langle node, Q \rangle \gets queue_1.dequeue()$\;
            $Q\gets$ \QueryIntraNode{$r_\textrm{IP},node,Q$}\;
            \lFor{$n$ \forin $node.children$}{$queue_1.enqueue(\langle n, Q \rangle)$}
        }
    }

    \Fn{\QueryIntraNode{$r_\textrm{IP},n_{\textrm{intra}},Q$}}{
        \KwIn{IP-Tree root $r_\textrm{IP}$, Intra-Index node $n_{\textrm{intra}}$, processed mismatching queries $Q$}
        \KwOut{processed mismatching queries $Q$}
        Create an empty queue $queue_2$ and enqueue $r_\textrm{IP}$\;
        \While{$queue_2$ \is not empty}{
            $n\gets queue_2.dequeue()$\;
            $q_{f}\gets\{q | n.RCIF[q]=\text{full}\}\backslash Q$\;
            \For{$\langle \Upsilon,qs \rangle$ \forin $n.BCIF$}{
                \If{$n_{\textrm{intra}}.W \cap \Upsilon=\emptyset$}{
                    $\pi \gets \textsf{ProveDisjoint}(n_{\textrm{intra}}.W, \Upsilon, pk)$\;
                    Add $\langle \pi, \Upsilon \rangle$ to $q$.VO $\forall q \in (q_{f} \cap qs)\backslash Q$\;
                    $Q \gets Q \cup (q_{f} \cap qs)$\;

                }
            }
            \If{$n_{\textrm{intra}}$ \is leaf node}{
                Add $n_{\textrm{intra}}.o$ to $q.R$ for $q\in q_{f}\backslash Q$\;
            }
            $q' \gets \{\}$\;
            \For{$n'$ \forin $n.children$}{
                \lIf{$n'$ intersects $n_{\textrm{intra}}$}{$q' \gets q' \cup n'.queries$}
            }
            $q_{p}\gets \{q | n.RCIF[q]=\text{partial}\}\backslash q'$\;
            \For{$n'$ \forin $n.children$}{
                \lIf{$n'$ intersects $n_{\textrm{intra}}$}{
                    $queue_2.enqueu(n')$
                }\Else{
              $qs \gets q_{p} \cap n'.queries\backslash Q$\;
              $W' \gets \textsf{trans}(n'.grid)$\;
               $\pi \gets \textsf{ProveDisjoint}(n_{\textrm{intra}}.W, W', pk)$\;
              Add $\langle \pi, W' \rangle$ to $q$.VO $\forall q \in qs$\;
              $Q \gets Q \cup qs$;}

            }
        }
        \Return{$Q$}\;
    }
\end{algorithm}

\section{Proof of Theorem~\ref{THE:ACC-SEC}}\label{prof:acc-security}

\accsecuritytheorem*

\begin{proof}

We omit the proof of this theorem for the first construction of the multiset accumulator in  Section~\ref{sec:acc}, as it has been shown to hold under the $q$-SDH assumption in~\cite{papamanthou2011optimal}.

We prove this theorem holds for the second construction by contradiction. Support that there is an adversary who outputs two multisets $X_1$ and $X_2$, where $X_1 \cap X_2 \neq \emptyset$, and a valid set disjoint proof $\pi$. This means that
\begin{align*}
    e(d_A(X_1), d_B(X_2)) &= e(\pi, g) \Rightarrow \\
    e(g^{A(X_1)}, g^{B(X_2)}) &= e(\pi, g) \Rightarrow \\
    {e(g, g)}^{A(X_1)B(X_2)} &= e(\pi, g) \Rightarrow \\
    \pi &= g^{A(X_1)B(X_2)}
\end{align*}
On the other hand, because $X_1 \cap X_2 \neq \emptyset$, we can get $Cs^q \in A(X_1) B(X_2)$, where $C$ is a non-zero constant. That is, \linebreak $A(X_1) B(X_2) = Cs^q + Q(s)$, where $Q(s)$ is some polynomial without the $s^q$ term. Therefore, the adversary can get
\begin{align*}
    \pi &= g^{A(X_1)B(X_2)} = g^{Cs^q} \cdot g^{Q(s)} \Rightarrow \\
    g^{s^q} &= {(\pi / g^{Q(s)})}^{C^{-1}}
\end{align*}
which violates the $q$-DHE assumption. Therefore, by contradiction, the theorem holds.
\end{proof}

\section{Proof of Theorem~\ref{THE:ALG-SEC}}\label{prof:alg-security}

\algsecuritytheorem*

\begin{proof}

We prove this theorem by contradiction.

\underline{Case 1:} The result $R$ contains an object $o^*$ such that $o^*\notin~\{o_i\}$.

Recall that in the result verification procedure, the verifier will check the integrity of the object with respect to the $MerkleRoot$ stored in the blockchain. Therefore, a successful forge means either a collision of the underlying cryptographic hash function, or a break of the blockchain protocol, which yields a contradiction.

\underline{Case 2:} The result $R$ contains an object $o^*$ such that $o^*$ does not satisfy the query~$q$.

It is trivial to see that such a case is impossible, as the verifier will check it locally.

\underline{Case 3:} There exists an object $o_x$ in the query window or subscription period, which is not in $R$ but satisfies the query~$q$.

First, note that the verifier (running a light node) syncs block headers with the blockchain network. Thus, the verifier  always verifies the results with respect to the latest block header. Now suppose there is a missing object $o_x$.
During verification, the verifier will examine the multiset accumulative values which cover the whole query window or subscription period. The missing object $o_x$ must fall under one multiset accumulative value in the VO\@. As discussed in Section~\ref{sec:ads_query}, such a missing object implies that the attribute of this matching object intersects with the equivalent multiset of some part of the query condition. This means that the adversary is able to construct two multisets $X_1$ and $X_2$, such that $X_1 \cap X_2 \neq \emptyset$, along with a corresponding set disjoint proof, which contradicts to Theorem~\ref{THE:ACC-SEC}.
\end{proof}

\section{Supplemental Experiment Results}

This section presents some supplemental experiment results.

\subsection{Comparison with MHT}\label{sec:mht_exp}

\begin{figure}[t]
    \centering
    \begin{subfigure}{.33\linewidth}
        \includegraphics[width=\linewidth]{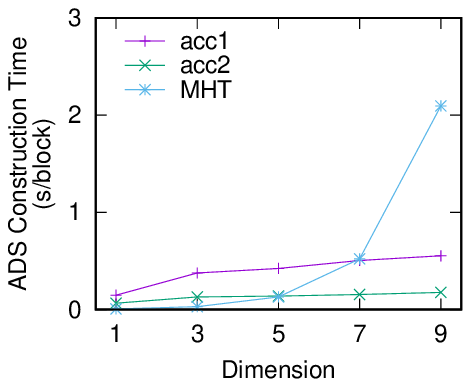}
    \end{subfigure}~%
    \begin{subfigure}{.33\linewidth}
        \includegraphics[width=\linewidth]{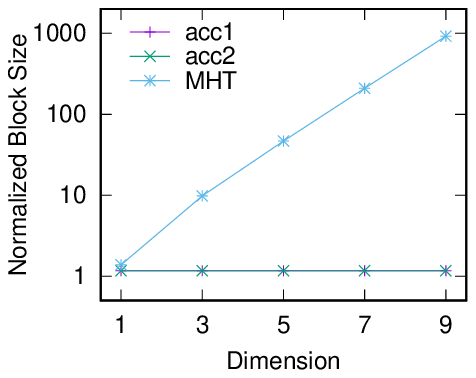}
    \end{subfigure}
    \caption{Comparison with MHT}\label{exp-fig:mht}
   \vspace{0.05in}
\end{figure}

As mentioned in Section~\ref{sec:basic_solution}, one major drawback of the traditional MHT-based solution is its prohibitively high overhead to support queries involving arbitrary attributes for multi-dimensional databases. To demonstrate that, we synthesize several datasets with different dimensionalities using the \textsf{WX} dataset and experimentally compare the MHT solution with our accumulator-based solutions in terms of the setup cost. Note that the original \textit{weather description} attribute is removed from the synthetic datasets, since MTHs cannot work with set-valued attributes.
As shown in Fig.~\ref{exp-fig:mht}(a), the construction time of our solutions is only slightly increased with dimensionality. In contrast, the MHT construction time is dramatically increased because it needs to build an MHT for every combination of attributes. Furthermore, to examine the ADS overhead, we show the average size of the ADS-embedded blocks in Fig.~\ref{exp-fig:mht}(b) (normalized by the original block size, plotted in log scale). While our solutions have a negligible fixed-size ADS regardless of data dimensionality, the space overhead incurred by MTH grows exponentially with dimensionality. In particular, when the dimensionality is higher than 3, the MHT's ADS  overhead is more than 10X--1,000X the original block size, which is unacceptable for practical use since the ADS would dominate the traffic and storage cost of a blockchain network.

\subsection{Impact of Selectivity}\label{sec:select_exp}

\begin{figure}[t]
    \centering
    \begin{subfigure}{.33\linewidth}
        \includegraphics[width=\linewidth]{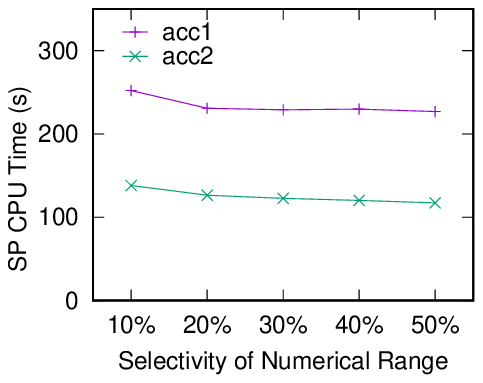}
    \end{subfigure}~%
    \begin{subfigure}{.33\linewidth}
        \includegraphics[width=\linewidth]{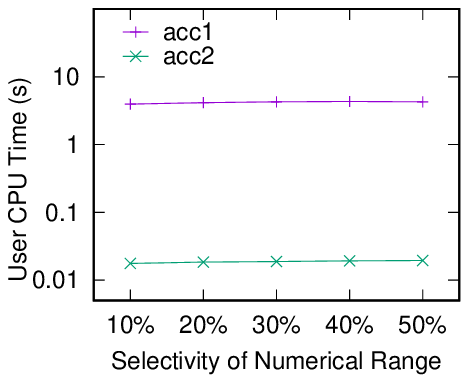}
    \end{subfigure}~%
    \begin{subfigure}{.33\linewidth}
        \includegraphics[width=\linewidth]{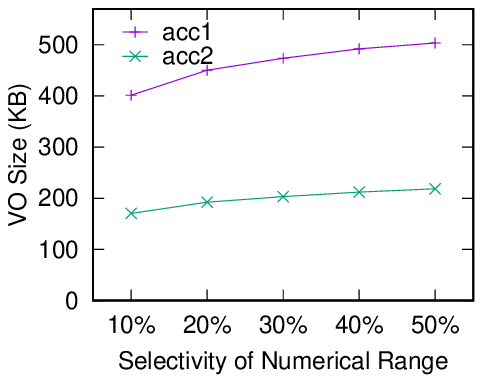}
    \end{subfigure}
    \caption{Impact of Selectivity (\textsf{4SQ})}\label{exp-fig:range4sq}
 \vspace{0.03in}
\end{figure}

\begin{figure}[t]
    \centering
    \begin{subfigure}{.33\linewidth}
        \includegraphics[width=\linewidth]{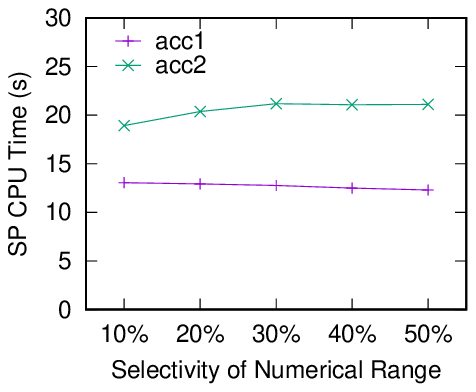}
    \end{subfigure}~%
    \begin{subfigure}{.33\linewidth}
        \includegraphics[width=\linewidth]{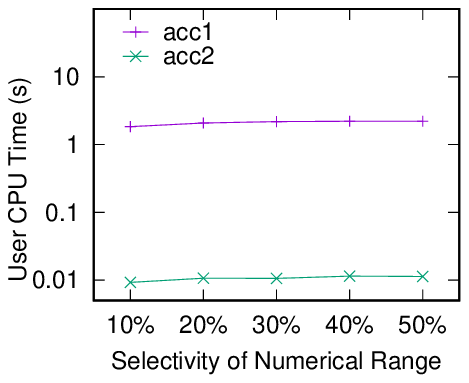}
    \end{subfigure}~%
    \begin{subfigure}{.33\linewidth}
        \includegraphics[width=\linewidth]{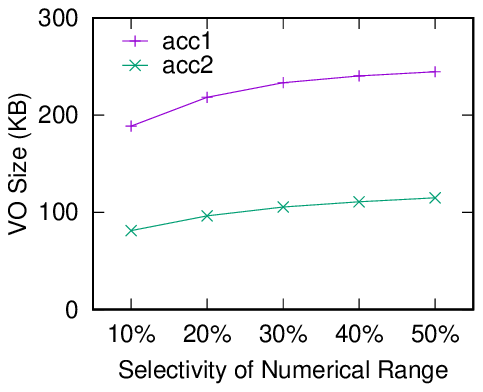}
    \end{subfigure}
    \caption{Impact of Selectivity (\textsf{WX})}\label{exp-fig:rangewx}
 \vspace{0.03in}
\end{figure}

\begin{figure}[t]
    \centering
    \begin{subfigure}{.33\linewidth}
        \includegraphics[width=\linewidth]{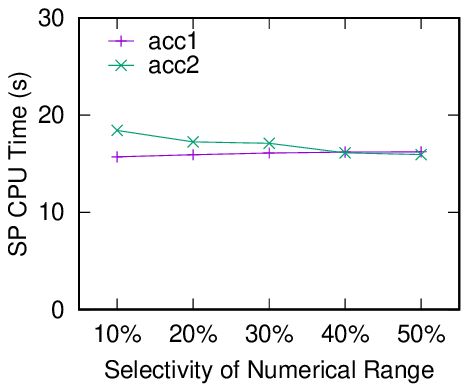}
    \end{subfigure}~%
    \begin{subfigure}{.33\linewidth}
        \includegraphics[width=\linewidth]{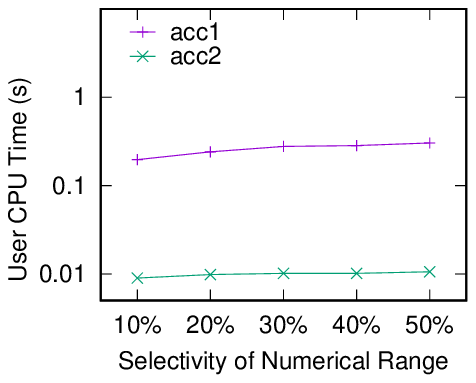}
    \end{subfigure}~%
    \begin{subfigure}{.33\linewidth}
        \includegraphics[width=\linewidth]{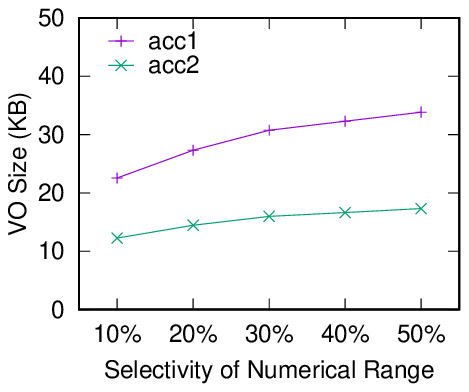}
    \end{subfigure}
    \caption{Impact of Selectivity (\textsf{ETH})}\label{exp-fig:rangeeth}
 \vspace{0.05in}
\end{figure}

Figs.~\ref{exp-fig:range4sq}--\ref{exp-fig:rangeeth} show the time-window query performance by varying the selectivity of the range predicate from 10\% to 50\%. The window size is fixed at 10 hours for \textsf{4SQ} and \textsf{ETH} and at 100 hours for \textsf{WX}.
Both the intra-block and inter-block indexes are enabled. For all datasets, the SP CPU time is generally decreased with increasing selectivity. This is because the SP query processing time is dominated by the proving of  mismatching objects. As a result, the more the objects selected, the less is the SP overhead. In contrast, the user CPU time remains largely the same under different settings. As for the VO size, it is slightly increased because of a larger number of hashes introduced by more query results. Overall, we can see that our solution performs efficiently under a wide range of selectivity settings.

\subsection{Impact of SkipList}\label{sec:skiplist_exp}
\begin{figure}[t]
    \centering
    \begin{subfigure}{.33\linewidth}
        \includegraphics[width=\linewidth]{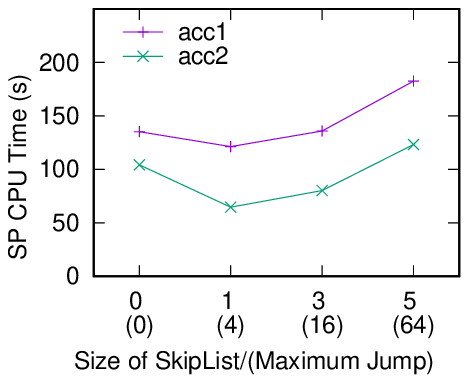}
    \end{subfigure}~%
    \begin{subfigure}{.33\linewidth}
        \includegraphics[width=\linewidth]{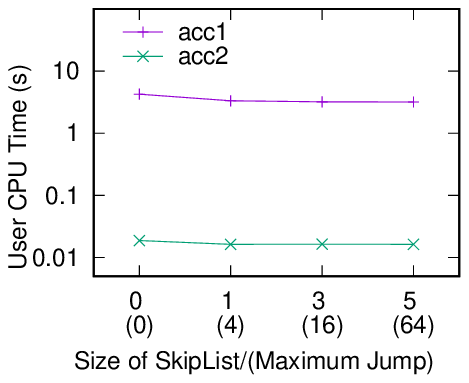}
    \end{subfigure}~%
    \begin{subfigure}{.33\linewidth}
        \includegraphics[width=\linewidth]{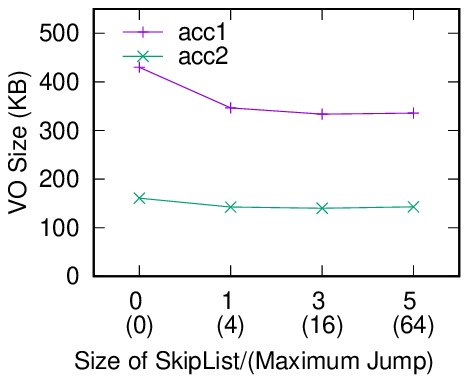}
    \end{subfigure}
    \caption{Impact of SkipList Size (\textsf{4SQ})}\label{exp-fig:skip4sq}
\end{figure}

\begin{figure}[t]
    \centering
    \begin{subfigure}{.33\linewidth}
        \includegraphics[width=\linewidth]{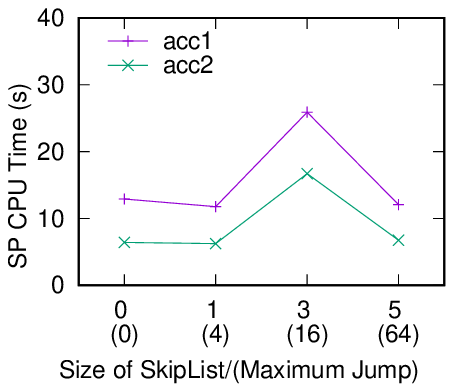}
    \end{subfigure}~%
    \begin{subfigure}{.33\linewidth}
        \includegraphics[width=\linewidth]{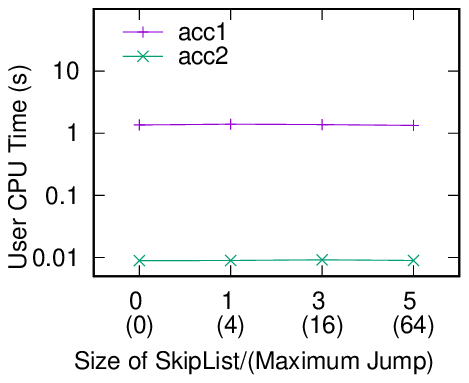}
    \end{subfigure}~%
    \begin{subfigure}{.33\linewidth}
        \includegraphics[width=\linewidth]{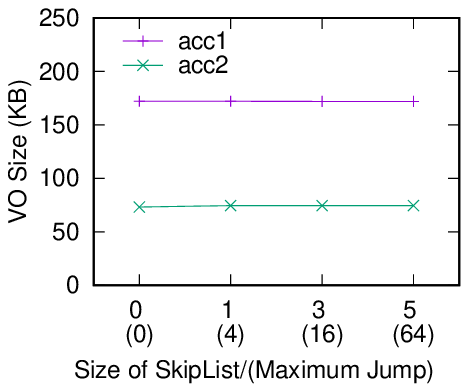}
    \end{subfigure}
    \caption{Impact of SkipList Size (\textsf{WX})}\label{exp-fig:skipwx}
\end{figure}

\begin{figure}[t]
    \centering
    \begin{subfigure}{.33\linewidth}
        \includegraphics[width=\linewidth]{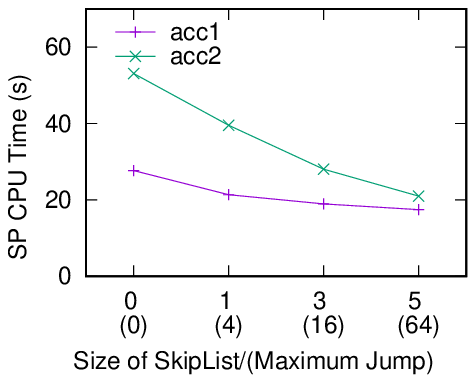}
    \end{subfigure}~%
    \begin{subfigure}{.33\linewidth}
        \includegraphics[width=\linewidth]{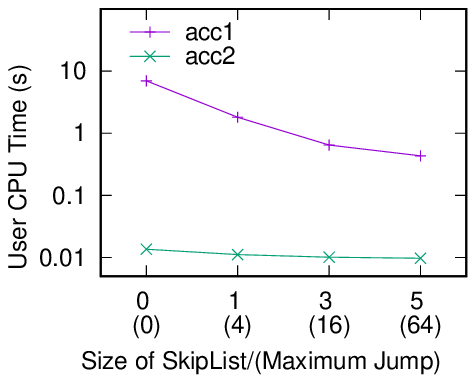}
    \end{subfigure}~%
    \begin{subfigure}{.33\linewidth}
        \includegraphics[width=\linewidth]{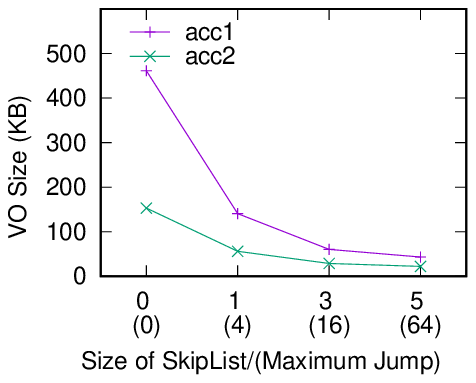}
    \end{subfigure}
    \caption{Impact of SkipList Size (\textsf{ETH})}\label{exp-fig:skipeth}
\end{figure}

This section investigates the impact of \emph{SkipList} in the inter-block index.
We set the query windows size the same as in Appendix~\ref{sec:select_exp}.
The size of \emph{SkipList} is varied from 0 to 5. Note that a size of zero means that we employ only an intra-block index but no inter-block index. For the other settings, both the intra-block and inter-block indexes are employed. Figs.~\ref{exp-fig:skip4sq}--\ref{exp-fig:skipeth} show the results for the three datasets, respectively. It is interesting to observe that the SP CPU time exhibits different trends for different datasets. This can be explained as follows. On the one hand, the \emph{SkipList} helps aggregate mismatch proofs across consecutive blocks. With an increased size of \emph{SkipList}, more mismatch proofs can be aggregated without accessing the actual blocks; hence the SP CPU time is reduced and a smaller VO is resulted. On the other hand, the larger the \emph{SkipList}, the more set elements will be added up as the input of the accumulator, which increases the SP CPU time. Furthermore, the effectiveness of using \emph{SkipList} for aggregating mismatch proofs depends on the distribution of the data. The combination of these factors contributes to the final SP CPU time. As a result, we observe fluctuations in SP CPU time for \textsf{4SQ} and \textsf{WX} but a steady decrease in \textsf{ETH} when the \emph{SkipList} size increases. Nevertheless, the user CPU time and the VO size are both monotonically reduced thanks to the aggregation of mismatching objects by the inter-block index.
Comparing $acc1$ and $acc2$, since \emph{acc2} can support online aggregation,
its VO size and also user CPU time are further reduced compared with those of \emph{acc1} in all cases tested.
\section{Practical Implementation}
\label{sec:prac_implement}

This section addresses the practical implementation issues of our proposed vChain framework. Since we need to compute an ADS and embed it into each block, the existing blockchains cannot be used directly. There are two possible solutions. First, we can develop a new chain by extending an open-source blockchain project.
Second, we can leverage smart contracts, trusted programs running on blockchains, to build a \emph{logical} chain that constructs and maintains the ADS for each block, on top of an existing blockchain (e.g., Ethereum~\cite{wood2014ethereum} or Hyperledger~\cite{IBM}). The advantage of the second solution is that we do not need to be bothered by the underlying system implementation, but focus on writing smart contracts to build the logical chain. Listing \ref{lst:buildvChain} shows an example of Ethereum smart contract that implements the vChain framework. Specifically, lines 4--14 define the structures of a block header and a block. The mapping structure, \emph{chainstorage}, indexes each block with the block hash (line 16). The function \emph{BuildvChain} first constructs the block header, including the intra-block index and the inter-block index (lines 22--24). Then, the block hash is computed (line 25) and, together with the input objects, assigned to the current block (lines 26--27). Finally, the newly constructed block is added to the \emph{chainstorage} using its block hash (line 28).

\lstinputlisting[language=Solidity,float,caption=Implementing vChain in Smart Contract, label=lst:buildvChain]{contract.sol}

\end{document}